\documentclass[letterpaper, 10 pt, conference]{ieeeconf}

\IEEEoverridecommandlockouts   
\usepackage{cite}

\usepackage{epsfig} 
\usepackage{graphics}
\usepackage{amssymb}
\usepackage{amsmath, amsthm}
\usepackage{xcolor}

\newcommand{\Real}{\mathbb{R}}
\newcommand{\N}{\mathbb{N}}
\newcommand{\fa}{\:\:\forall\:}
\newcommand{\p}[3]{#1_{#2_{#3}}}  
\newcommand{\norm}[3]{\lVert #1 \rVert_{#2}^{#3}}
\newcommand{\D}[1]{\mathcal{D}^{#1}}
\newcommand{\mbf}[1]{\mathbf{#1}}
\newcommand{\mbb}[1]{\mathbb{#1}}
\newcommand{\bmat}[1]{\begin{bmatrix}#1\end{bmatrix}}
\newcommand{\inner}[2]{\left\langle #1, #2 \right\rangle}
\newcommand{\bs}[1]{\boldsymbol{#1}} 
\newcommand{\tnf}[1]{\textnormal{#1}}

\newcommand{\T}{\mathcal{T}}
\newcommand{\R}{\mathcal{R}}
\newcommand{\U}{\mathcal{U}}
\newcommand{\Z}{\mathcal{Z}}

\newcommand{\Ncal}{\mathcal{N}}
\newcommand{\Pop}{\mathcal{P}}
\newcommand{\Q}{\mathcal{Q}}
\newcommand{\Sm}{\mathcal{S}}
\newcommand{\C}{\mathcal{C}}
\newcommand{\M}{\mathcal{M}}
\newcommand{\K}{\mathcal{K}}
\newcommand{\Hop}{\mathcal{H}}

\newcommand{\G}{\mathcal{G}}

\newtheorem{definition}{Definition}
\newtheorem{lemma}{Lemma}

\newtheorem{theorem}{Theorem}

\renewcommand{\baselinestretch}{0.98}

\newcounter{MYtempeqncnt}

\title{\LARGE \bf Distributed Sum-of-Squares Programming for Local Stability Analysis of Polynomial PDEs} 

\author{Carl R. Richardson  \hspace{3mm} Declan S. Jagt \hspace{3mm} Matthew M. Peet \hspace{3mm} Antonis Papachristodoulou%
\thanks{This work was supported in part by the National Science Foundation under Grant NSF CCF-2429973 and in part by the EPSRC under project UKRI2108. For the purpose of Open Access, the author has applied a CC BY public copyright licence to any Author Accepted Manuscript (AAM) version arising from this submission.}%
\thanks{DS. Jagt and MM. Peet are with the Department of Mechanical and Aerospace Engineering, Arizona State University, Tempe, AZ 85281, USA}%
\thanks{CR. Richardson and A. Papachristodoulou are with the Department of Engineering Science, University of Oxford, Oxford, OX1 3PJ, UK}%
}
   
\begin{document}

\maketitle
\thispagestyle{empty}
\pagestyle{empty}

\begin{abstract}
It has recently been shown that the evolution of a state, described by a Partial Differential Equation (PDE), can be more conveniently represented as the evolution of the state's highest spatial derivative (the ``fundamental state''), which lies in $L_2$ and has no boundary conditions or continuity constraints. For linear PDEs, this yields a Partial Integral Equation (PIE) parametrized by Partial Integral (PI) operators mapping the fundamental state to the PDE state. In this paper, we show that for polynomial PDEs, the dynamics of the fundamental state can be compactly expressed as a distributed polynomial in the fundamental state, parametrized by a new tensor algebra of PI operators acting on the tensor product of the fundamental state. We further define a sum-of-squares (SOS) parameterization of the distributed polynomial and use this to construct a distributed SOS program, for testing local stability of polynomial PDEs.
\end{abstract}

\section{Introduction}
In this paper, we address the problem of representation and local stability analysis of polynomial Partial Differential Equations (PDEs). Such PDEs are commonly used to model physical processes including quantum mechanics, condensed matter, fluid dynamics, population growth, and wave propagation \cite{logan2004applied}. However, some features of the standard PDE representation create challenges for analysis and simulation. For example, consider Burgers' Equation with the additional reaction term and Dirichlet Boundary Conditions (BCs):
\begin{equation} \label{eq:modBurgs}
    \begin{split}
        \p{u}{t}{}(t,s) &= \p{u}{ss}{}(t,s)- u(t,s) \p{u}{s}{}(t,s) + ru(t,s) \\
        u(t,0) &= u(t,1) = 0 \hspace{3mm} t \geq 0\, \hspace{1mm} s\in[0,1]
    \end{split}
\end{equation}
Using integration by parts, the Poincar\'e inequality and invoking the BCs, the Lyapunov Functional (LF) $V(u) = \norm{u}{L_{2}}{2} \geq 0$ can verify global stability of the zero solution if $r < \pi^2$ \cite{valmorbida2014semi}. This approach works well for \eqref{eq:modBurgs} since $\dot{V}(u)$ only depends on $u, u_{s}$, allowing the Poincar\'e inequality to be applied. More generally, this is not applicable and the representation of PDEs as polynomial functions of $u, u_{s}, \dots, \p{u}{s}{n}$ complicates the stability analysis, even for simple LF candidates.

Now suppose a Dirichlet BC is replaced by a Neumann BC, $\p{u}{s}{}(t,1)=0$. Despite the PDE and LF candidate remaining the same, the derivative of this candidate along the solutions to \eqref{eq:modBurgs} is different, meaning the previously obtained stability test is no longer valid. Furthermore, it is unclear from this representation how the BCs impact stability properties and how they can be accounted for when testing the fitness of a LF candidate.

Because of these difficulties, most prior work focused on the stability analysis of specific nonlinear PDEs and BCs. For example, stability conditions have been derived for the Navier-Stokes  \cite{goulart2012global, huang2015sum, fuentes2022global} and the Kuramoto-Sivashinsky equations with periodic BCs \cite{goluskin2019bounds}. These results offer some, but limited insight into how to test stability of other nonlinear PDEs. Other work has studied limited classes of PDEs including the wave equation with a bounded nonlinear term \cite{fridman2016new, prieur2026stability} and a class of 2nd order parabolic PDEs \cite{valmorbida2015stability, mironchenko2019monotonicity}. These results focus on specific applications, and the proposed stability conditions may be challenging to enforce in practice. Finally, moment methods have been applied to test bounds on energy functionals of nonlinear PDEs \cite{korda2022moments, marx2018moment}, but an intuitive algorithm to test stability is missing.

To address the challenges posed by the representation of nonlinear PDEs, a new approach represents the dynamics of the PDE state as a Partial Integral Equation (PIE)~\cite{peet2021partial}. This representation expresses the dynamics of the PDE state only in terms of its highest spatial derivative, known as the ``fundamental state''. Unlike the PDE state, which lies in a Sobolev space, the fundamental state lies in $L_{2}$ and is free of BCs and continuity constraints. For linear PDEs, the PIE is parametrized by PI operators which map the fundamental state to the PDE state and its derivatives. Through these operators, the BCs and continuity constraints are embedded in the PIE, rather than constraining the PDE state. 

More recently, the PIE framework was extended to develop a global stability test for quadratic PDEs \cite{jagt2023pie}. In this setting, the PIE was parametrized by a special case of the more general ``Tensor-PI'' operators presented in this work. Leveraging the PIE representation, a suite of SDP tools have been developed for stability analysis, estimation, and control of PDEs. A subset of examples are referenced here \cite{jagt20222, wu2023h, braghini2025static}, all of which are implemented in the open-source software package PIETOOLS \cite{shivakumar2025pietools}.

\textbf{Contribution}: Inspired by the SOS framework for polynomial ODEs \cite{prajna2002introducing}, this article defines a PI representation of distributed polynomials in the fundamental state. This representation enables parameterization of the PIE, such that it can represent 1D polynomial PDEs, and defines distributed semi-algebraic sets, like the $L_{2}$-ball. A SOS parameterization of the distributed polynomial is also defined, enabling positivity of distributed polynomials to be enforced through equality constraints. Leveraging such distributed polynomials, a distributed SOS program is constructed to test local stability on the $L_{2}$-ball. This test is suitable for all 1D polynomial PDEs with linear BCs.

\section{Notation} \label{sec:not}
For $\Omega := [a,b] \subset \Real$, let $L_{2}^{m \times n}[\Omega]$ and $L_{\infty}^{m \times n}[\Omega]$ denote the spaces of square-integrable and bounded $\Real^{m \times n}$-valued functions on $\Omega$. For $\alpha, k \in \N$, let $\p{x}{s}{\alpha} := \frac{\partial^{\alpha}}{\partial s^{\alpha}}x$, $\D{k}x(s) := [x(s)\ \p{x}{s}{}(s)\ \cdots\ \p{x}{s}{k}(s)]'$, and denote the degree $k$ Sobolev space on $\Omega$ by $H_{2}^{k}[\Omega]$.
For $u \in \Real^{m}$, $v \in \Real^{n}$, let $u \otimes v \in \Real^{mn}$ denote the Kronecker product. For $x \in L_{2}^{m}[a,b]$, $y \in L_{2}^{n}[c,d]$, define the tensor product $x \otimes y \in L_{2}^{mn}[[a,b] \times [c,d]]$ as $[x \otimes y](s_{1},s_{2})= x(s_{1}) \otimes y(s_{2})$. For $x \in L_{2}[\Omega]$, a degree-$d$ distributed monomial is defined as $d$ tensor products $x^{\otimes d} := x \otimes \cdots \otimes x \in L_{2}[\Omega^{d}]$ and the degree-$d$ distributed monomial basis on $x$ is denoted by $\Z_{d}(x) := [1 \ x \ x^{\otimes 2} \ \cdots \ x^{\otimes d}]'$, where the degree-$0$ term is omitted when convenient. Let $\Real[\bs{\theta}_{k}]$ denote the polynomial ring in $k$ variables. For $G \in \Real^{n \times m}$, row $i$ is denoted by $G_{i,:}$. Let $\Sm_{w} := \{ (i,j) \in \N^{2} : \hspace{1mm} i + j = w \}$ and denote a subset of all $n!$ permutations by $A_{n} \subseteq \{1,\dots,n\}^{n}$. For $\alpha \in A_{n}$ and $k, j \in \{1, \dots, n\}$, the indicator functions used are defined below, where $\mbb{I}_{1}(\theta_{1}) := 1$ if $n=1$.
\begin{align*}
    \mbf{1}(x) &:= \begin{cases} 1 & \text{if} \ x=0 \\ 0 & \text{else} \end{cases} & 
    \mbf{I}_{\epsilon}(x) &:= \begin{cases} 1 & \text{if} \ x \geq \epsilon \\ 0 & \text{else} \end{cases} \\
    \gamma(\alpha_{k},j) &:= \begin{cases} 1 & \text{if}\ k\leq j \\ 2 & \text{if}\ k > j \end{cases} & \mbb{I}_{\alpha}(\bs{\theta}_{n}) &:= \prod_{j=1}^{n-1} \hspace{-1mm} \mbf{I}_{0}(\theta_{\alpha_{j+1}} \hspace{-1mm} - \theta_{\alpha_{j}})
\end{align*}
Finally, nested integrals are compactly denoted by
\begin{equation*}
    \int_{\Omega^{k}} K(\bs{\theta}_{k}) d \bs{\theta}_{k} = \int_{a}^{b} \cdots \int_{a}^{b} K(\theta_{1}, \dots, \theta_{k}) d \theta_{k} \dots d \theta_{1}
\end{equation*}

\section{A Map From Fundamental State to PDE State} \label{sec:funstate}
For a domain $\Omega := [a,b]$, we consider the $n$th order, 1D, polynomial PDE 
\begin{equation} \label{eq:pde}
    \p{u}{t}{}(t,s) = c(s)^{\top} \U(u, \p{u}{s}{}, \dots, \p{u}{s}{n}) \quad s \in \Omega
\end{equation}
where $u(t) \in X_{B}[\Omega]$ is the PDE state, $c \in L_{\infty}^{m}[\Omega]$, and $\U$ is a vector of $m \in \N$ monomials in $u, \p{u}{s}{}, \dots, \p{u}{s}{n}$
\begin{equation} \label{eq:pde_mon}
    \hspace{-2mm} \U_{i}(u, \p{u}{s}{}, \dots, \p{u}{s}{n}) := u(t,s)^{\alpha_{i0}} \p{u}{s}{}(t,s)^{\alpha_{i1}} \hspace{-4mm} \dots \p{u}{s}{n}(t,s)^{\alpha_{in}}
\end{equation}
defined by the coefficient matrix $A := [\alpha_{ij}] \in \N^{m \times (n+1)}$. The PDE is assumed to have linear BCs; hence $X_{B}[\Omega]$ is
\begin{equation} \label{eq:X_B}
    \hspace{-2mm} X_{B}[\Omega] \hspace{-1mm} := \hspace{-1mm} \bigg\{ u \in H_{2}^{n}[\Omega] \hspace{-1mm}: B \hspace{-1mm} \begin{bmatrix} \D{n-1}u(a) \\ \D{n-1}u(b) \end{bmatrix} \hspace{-1mm} = 0, B \in \Real^{n \times 2n} \hspace{-1mm} \bigg\}
\end{equation}
Dirichlet, Neumann, and Robin BCs can be enforced by setting $B$ appropriately. In summary, a polynomial PDE is parametrized by $\{c, B, A \}$ from \eqref{eq:pde}-\eqref{eq:X_B}.

\begin{definition}[Classical solution to the PDE] \label{def:pde_sol}
    For a given $u_{0} \in X_{B}[\Omega]$, $u$ is a classical solution to the polynomial PDE defined by $\{c, B, A\}$ if $u$ is Fr\'echet differentiable, $u(0)=u_{0}$, and \eqref{eq:pde} is satisfied by $u(t) \in X_{B}[\Omega] \fa t \geq 0$.
\end{definition}

In this section, we assume full-rank BCs (see Lem.~\ref{lem:fr}) and show how a PDE of the form \eqref{eq:pde} can be defined in terms of PI operators acting on the fundamental state. The advantage of this representation is that the BCs and continuity constraints from \eqref{eq:X_B} are moved into the representation of the dynamics. We now recap the definition of a PI operator \cite{peet2021partial}.

\begin{definition} \label{def:PI}
    For a given domain $\Omega := [a,b]$ and $k,r \in \N$, the parameter space $\Ncal_{3}^{k \times r}[\Omega] := L_{\infty}^{k \times r}[\Omega] \times L_{2}^{k \times r}[\Omega^{2}] \times L_{2}^{k \times r}[\Omega^{2}]$ is defined. For $x \in L_{2}^{r}[\Omega]$ and any $\{R_{0}, R_{1}, R_{2} \} \in \Ncal_{3}^{k \times r}$, the associated PI operator is defined as
    \begin{equation*} 
        (\R x)(s) \hspace{-0.5mm} = \hspace{-0.5mm} R_{0}(s) x(s) + \hspace{-1mm}  \int_{a}^{s} \hspace{-3mm}  R_{1}(s, \theta) x(\theta) d \theta + \hspace{-1mm}  \int_{s}^{b} \hspace{-3mm} R_{2}(s, \theta) x(\theta) d \theta
    \end{equation*}   
    where $\R : L_{2}^{r}[\Omega] \to L_{2}^{k}[\Omega]$.
\end{definition}

For $\{R_{0}, R_{1}, R_{2}\} \in \Ncal_{3}^{k \times r}$, define $\R := \Pop_{\{R_{0}, R_{1}, R_{2}\}} \in \Pi_{3}^{k \times r}$. If $\R = \Pop_{\{0, R_{1}, R_{2}\}}$, then $\R \in \Pi_{2}^{k \times r}$. We refer to these as 3-PI and 2-PI operators, respectively. Both operators form *-algebras \cite{jagt2025state}; hence, the summation, composition, and adjoint can be computed directly from their parameters.

Now consider solutions to \eqref{eq:pde}. Since $u(t) \in X_{B} \subset H_{2}^{n} \fa t \geq 0$, the $n$th order spatial derivative exists and need only satisfy $\p{u}{s}{n}(t) \in L_{2}$. Hence, it is free of BCs and continuity constraints. We refer to this derivative $v(t) := \p{u}{s}{n}(t)$ as the \emph{fundamental state}. A map from the PDE state space, $X_{B}$, to the fundamental state space, $L_{2}$, is defined by the differential operator $\frac{\partial^{n}}{\partial_{s}^{n}}: X_{B} \to L_{2}$. Assuming full-rank BCs, an inverse map exists and is defined by a 2-PI operator $\T: L_{2} \to X_{B}$. We recall the following Lemma from \cite{shivakumar2024extension} which provides explicit formulae mapping $B$ to the operators $\T$, as well as $\R_{j}:=\partial_{s}^{j} \circ \T$.

\begin{lemma} \label{lem:fr}
    Let $\Omega := [a,b]$ and $X_B$ be as in \eqref{eq:X_B}, where $B_1, B_2 \in \Real^{n \times n}$ such that $B = [B_1 \hspace{1mm} B_2]$. For $s \in \Omega$ and $i,j \in \{0,\dots,n-1\}$, define
    \begin{align}
      \hspace{-1mm} G_{ij}(s) &:= \begin{cases}
                        \frac{(s-a)^{j-i}}{(j-i)!} & j \geq i \\
                        0 & else
                    \end{cases}
        &
        h_{i}(s) &:= \frac{s^{n-i-1}}{(n-i-1)!}
    \end{align}
    and set $G(s) := [G_{ij}(s)]$, $H(s) := [h_0(s)\ \dots\ h_{n-1}(s)]'$.

    If $B_1 + B_2 G(b)$ is full rank, then the inverse operator $\mathcal{T}: L_2[\Omega] \to X_B[\Omega]$ exists with $\mathcal{T} = \mathcal{P}_{\{T_1,T_2\}} \in \Pi_2$, where
    \begin{equation} \label{eq:T}
        \begin{split}
             T_{1}(s, \theta) &:= h_{0}(s-\theta) - G_{0, :}(s) F(\theta) \\
             T_{2}(s, \theta) &:= -G_{0, :}(s) F(\theta)   
        \end{split}
    \end{equation}
    and $F(\theta) := (B_1 + B_2 G(b))^{-1} B_2 H(b-\theta)$. Defining $\mathcal{R}_j := \frac{\partial^j}{\partial s^j} \circ \mathcal{T} \in \Pi_2$, we have for all $u \in X_B[\Omega]$ and $v \in L_2[\Omega]$
    \begin{align} \label{eq:derivs}
    \hspace{-1mm} \p{u}{s}{j}(s) &= (\R_{j} [\frac{\partial^{n}}{\partial s^{n}} u])(s) &
    (\R_{j}v)(s) &= (\frac{\partial^{j}}{\partial s^{j}} [\T v])(s)
    \end{align}
    \end{lemma}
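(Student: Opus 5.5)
The plan is to use Taylor's theorem with integral remainder to write a general $H_2^n$ function as a polynomial part plus a Volterra integral of its $n$th derivative. The BCs then pin down the polynomial part, and the full-rank hypothesis makes that step solvable.

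\emph{Step 1 (Taylor representation).} For $u\in H_2^n[\Omega]$ with $v:=\partial_s^n u$, set $x:=\D{n-1}u(a)\in\Real^n$. Then for $i\in\{0,\dots,n-1\}$,
\begin{equation*}
\p{u}{s}{i}(s)=\sum_{j\ge i}\frac{(s-a)^{j-i}}{(j-i)!}x_j+\int_a^s \frac{(s-\theta)^{n-i-1}}{(n-i-1)!}v(\theta)\,d\theta ,
\end{equation*}
which in vector form reads $\D{n-1}u(s)=G(s)x+\int_a^s H(s-\theta)v(\theta)d\theta$. This uses the stated $G$, and $h_i(s-\theta)=(s-\theta)^{n-i-1}/(n-i-1)!$. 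At $s=a$ it gives $\D{n-1}u(a)=x$. At $s=b$ it gives $\D{n-1}u(b)=G(b)x+\int_a^bH(b-\theta)v(\theta)d\theta$.

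\emph{Step 2 (Impose the BCs).} The condition $B[\D{n-1}u(a);\D{n-1}u(b)]=0$ becomes $(B_1+B_2G(b))x=-B_2\int_a^b H(b-\theta)v(\theta)d\theta$. Since $B_1+B_2G(b)$ is invertible, $x=-\int_a^bF(\theta)v(\theta)d\theta$. Substituting this into the row $i=0$ of Step 1 gives
\begin{equation*}
u(s)=\int_a^s\big(h_0(s-\theta)-G_{0,:}(s)F(\theta)\big)v(\theta)d\theta-\int_s^bG_{0,:}(s)F(\theta)v(\theta)d\theta=(\T v)(s),
\end{equation*}
after splitting $\int_a^b=\int_a^s+\int_s^b$. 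This proves $\T\partial_s^n u=u$ on $X_B$.

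\emph{Step 3 (Converse direction and bijectivity).} For arbitrary $v\in L_2$, define $u:=\T v$, i.e. $u(s)=G_{0,:}(s)x+\int_a^s h_0(s-\theta)v(\theta)d\theta$ with $x$ as above. Differentiating under the integral (Leibniz rule; $h_0$ has vanishing derivatives of order $<n-1$ at $0$) shows that $u\in H_2^n$, $\partial_s^n u=v$, and $\D{n-1}u$ is given by the Step 1 formula with this $x$. Reversing the algebra of Step 2 shows the BCs hold, so $u\in X_B$. Hence $\T$ maps $L_2$ into $X_B$ and is a two-sided inverse of $\partial_s^n$.

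\emph{Step 4 (Derivative operators).} Define $\R_j:=\partial_s^j\circ\T$, so the second identity in \eqref{eq:derivs} holds by definition. The first identity follows by applying $\partial_s^j$ to $u=\T\partial_s^n u$. To see that $\R_j\in\Pi_2$ for $j\le n-1$, differentiate the kernels explicitly. The $\partial_s^j$ of the boundary terms from Leibniz vanish by the same fact about $h_0$. The resulting kernels are $h_j(s-\theta)-\partial_s^jG_{0,:}(s)F(\theta)$ and $-\partial_s^jG_{0,:}(s)F(\theta)$.

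The case $j=n$ is where I expect the main difficulty. Here $\R_n=I$ has a multiplier part $R_0=1$, so for this index the operator must be read as a 3-PI operator, or the claim restricted to $j<n$. The bulk of the remaining care is the Leibniz bookkeeping in Step 3.
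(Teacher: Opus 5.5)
Your proof is correct. The paper gives no argument of its own for this lemma; it defers to Thms.~10 and~12 of \cite{shivakumar2024extension}, which treat the more general setting of coupled GPDE input-output systems.

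Your route is a self-contained, elementary proof for the scalar $n$th-order case:
\begin{itemize}
\item write $\D{n-1}u(s)=G(s)\D{n-1}u(a)+\int_a^s H(s-\theta)v(\theta)\,d\theta$ by Taylor's formula with integral remainder;
\item solve the BCs for $\D{n-1}u(a)$ using invertibility of $B_1+B_2G(b)$;
\item check directly that $\T v\in X_B$ with $\partial_s^n\T v=v$.
\end{itemize}
It shows transparently where the formulas for $G$, $H$, $F$ and the full-rank condition come from. It also gives explicit kernels for $\R_j$, $j\le n-1$, namely $h_j(s-\theta)-G_{j,:}(s)F(\theta)$ and $-G_{j,:}(s)F(\theta)$, using $\partial_s^jG_{0,:}=G_{j,:}$. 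The citation, by contrast, buys generality at the cost of transparency.

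Two small points of care:
\begin{itemize}
\item In Step~4, the cleanest justification is to differentiate the form $G_{0,:}(s)x+\int_a^s h_0(s-\theta)v(\theta)\,d\theta$ from Step~3. In the split form, the $G_{0,:}F$ boundary terms from $\int_a^s$ and $\int_s^b$ cancel each other rather than vanishing individually. Only the $h_0$ part produces boundary terms, and those vanish because $h_k(0)=0$ for $k<n-1$.
\item The Leibniz rule for $v\in L_2$ should be read in the absolutely continuous sense. At the last order this is exactly what yields $u\in H_2^n$.
\end{itemize}

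Your caveat about $j=n$ is consistent with the paper, which introduces $\R_n=\Pop_{\{1,0,0\}}\in\Pi_3$ separately right after the lemma.
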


\begin{proof}
    Refer to Thm. 10 and Thm. 12 in \cite{shivakumar2024extension}.
\end{proof}

Using this Lemma, we can express the PDE state and its derivatives in terms of the fundamental state. Clearly, an identity operator $\R_{n} = \Pop_{ \{1, 0, 0 \} } \in \Pi_{3}$ can also be defined such that $\p{u}{s}{n}(s) = (\R_{n}v)(s)$. Substituting these relations into \eqref{eq:pde}, we find that if $u$ satisfies the PDE, then $v$ satisfies
\begin{equation} \label{eq:pie1}
    \p{(\T v)}{t}{}(t,s) = c(s)^{\top} \U \left(\T v, \p{(\T v)}{s}{}, \dots, \p{(\T v)}{s}{n} \right)
\end{equation}
where $v(t) \in L_{2}[\Omega] \fa t \geq 0$. This PIE representation is free of BCs and continuity constraints, and is only represented in terms of the fundamental state. In the next section, we define the representation of a distributed polynomial and show how \eqref{eq:pie1} can be expressed in that format.

\section{Polynomials on a Distributed State} \label{sec:poly}
In this section, we aim to generalize \cite[Section IV]{jagt2023pie} to enable the representation of higher order tensor products of PI operators, such as $(\T v)^{\otimes \alpha_{i0}} (\R_{1} v)^{\otimes \alpha_{i1}} \dots (\R_{n} v)^{\otimes \alpha_{in}}$ from \eqref{eq:pie1}, and integrals over these higher order tensor products, which arise in the representation of LFs and distributed semi-algebraic sets. We first define the distributed polynomial in its most general form and then define two classes of operators to parametrize the distributed polynomial such that it can represent both terms. The key advantage of the distributed polynomial representation is that it is unique, which will later enable us to enforce equality constraints.

\begin{definition} \label{def:dispoly}
    For a given domain $\Omega := [a,b]$, we define a DP of degree $d \in \N$ on $x \in L_{2}[\Omega]$ as
    \begin{equation*}
        p(x) := \C \Z_{d}(x) := \sum_{i=0}^d \C_i x^{\otimes i}
    \end{equation*}
    where $\Z_{d}(x)$ is the distributed monomial basis on $x$ and $\C:= [\C_{0} \hspace{1mm} \cdots \hspace{1mm} \C_{d}]$ with $\C_{i}$ being an operator acting on the degree $i$ distributed monomial.
\end{definition}

In this paper, $\C_0 \in \Real$ and $\{\C_1, \dots \C_d\}$ are parametrized by either Tensor-PI or Functional-PI operators, which are presented next.

\subsection{Tensor-PI Operators} \label{sec:TPI}
The class of Tensor-PI operators is used to represent tensor products of 3-PI operators, mapping elements of $L_{2}[\Omega^{d}]$ to elements of $L_{2}^{k}[\Omega]$.
\begin{definition} \label{def:TPI}
     For a given domain $\Omega:= [a,b]$ and $d, k \in \N$, the operator $\Hop: L_{2}[\Omega^{d}] \to L_{2}^{k}[\Omega]$ is a Tensor-PI operator if there exist $m, k_{ij} \in \N$ and 3-PI operators $\R_{ij}: L_{2}[\Omega] \to L_{2}^{k_{ij}}[\Omega]$ such that
    \begin{equation*} 
        (\Hop[x_{1} \otimes \cdots \otimes x_{d}])(s)
        = \sum_{i=1}^{m} (\R_{i1}x_{1})(s) \otimes \cdots \otimes (\R_{id} x_{d})(s)
    \end{equation*}
    for all $x_{j} \in L_{2}[\Omega]$, where $k = \prod_{j=1}^{d} k_{ij} \fa i$.
\end{definition}
The operator $\Hop$ in Def. \ref{def:TPI} is denoted by $\Hop = \sum_{i=1}^{m} \Hop_{i}$ where $\Hop_{i} = \R_{i1} \otimes \cdots \otimes \R_{id}$\footnote{Formally, this notation is more akin to the face-splitting product.} or $\Hop_{i} = \R_{i1}^{\otimes d}$ if the operators $\R_{i1}, \dots, \R_{id}$ are the same. This includes all operators that can equally be represented as a linear combination of: (i) operators that act on a tensor product; (ii) tensor products of operators. If an operator $\Hop$ satisfies Def. \ref{def:TPI}, then $\Hop$ is said to be a T-PI operator where $\Hop \in \Pi_{T}^{k}$. This representation enables tensor products of PI operators to be represented as distributed polynomials.

\subsection{Functional-PI Operators}
Functional operators map function-valued objects to real scalar values. Although such functionals $\K: L_{2}[\Omega^d] \to \Real$ can always be expressed as an integral over any bounded kernel, $\K w := \int_{\Omega} K(\theta) w(\theta) d \theta$, for our purposes, the kernel is restricted to being polynomial. This class of operators is used to parametrize the distributed polynomial such that it can represent LFs and distributed semi-algebraic sets.

\begin{definition} \label{def:FPI}
    For a given domain $\Omega :=[a,b]$ and $d \in \N$, the operator $\K: L_{2}[\Omega^{d}] \to \Real$ is a Functional-PI operator if there exist $K_{\alpha} \in \Real[\bs{\theta}_{d}]$ for $\alpha \in A_{d}$ such that
    \begin{equation}
        \hspace{-1mm} \K x = \sum_{\alpha \in A_{d}} \hspace{-1mm} \int_{\Omega^{d}} \mathbb{I}_{\alpha}(\bs{\theta}_{d}) K_{\alpha}(\bs{\theta}_{d}) x(\bs{\theta}_{d}) d \bs{\theta}_{d} \hspace{2mm} \fa x \in L_{2}[\Omega^{d}]
    \end{equation}
\end{definition}    

Note that $A_{d}$ and $\mathbb{I}_{\alpha}(\bs{\theta}_{d})$ are defined in Section \ref{sec:not}. If an operator $\K$ satisfies Def. \ref{def:FPI}, then $\K$ is said to be an F-PI operator where $\K \in \Pi_{F}$. Integrals over T-PI operators are not obviously in the form of an F-PI operator; however, Lem.~\ref{lem:int_split} shows they are.

\begin{lemma} \label{lem:int_split}
    For a given domain $\Omega :=[a,b]$ and $d \in \N$, then any integrable function $K:\Omega^{d} \to \Real$ satisfies
    \begin{equation}
        \int_{\Omega^d} K(\bs{\theta}_{d}) \, d \bs{\theta}_{d}
        = \sum_{\alpha\in A_{d}} \int_{\Omega^{d}} \mathbb{I}_{\alpha}(\bs{\theta}_{d}) K(\bs{\theta}_{d}) \, d \bs{\theta}_{d}
    \end{equation}
\end{lemma}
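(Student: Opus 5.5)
The identity rests on partitioning $\Omega^{d}$ according to the ordering of the coordinates. Note first that the statement only makes sense if $A_{d}$ is the full set of all $d!$ permutations of $\{1,\dots,d\}$. Notation Section~\ref{sec:not} allows $A_{n}$ to be a proper subset, but for Lem.~\ref{lem:int_split} it must be all of them. I will take $A_{d}$ to be the full permutation group $S_{d}$. For $d=1$ the claim is trivial, since $\mbb{I}_{1}\equiv 1$ and $A_{1}=\{1\}$, so assume $d\ge 2$.

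For each $\alpha\in A_{d}$, define the closed ordered simplex
\[
\Delta_{\alpha}:=\{\bs{\theta}_{d}\in\Omega^{d}:\theta_{\alpha_{1}}\le\theta_{\alpha_{2}}\le\cdots\le\theta_{\alpha_{d}}\}.
\]
By the definition of $\mbf{I}_{0}$, the indicator $\mbb{I}_{\alpha}(\bs{\theta}_{d})=\prod_{j=1}^{d-1}\mbf{I}_{0}(\theta_{\alpha_{j+1}}-\theta_{\alpha_{j}})$ is exactly the characteristic function of $\Delta_{\alpha}$. The argument then has three steps.

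\textbf{Step 1 (covering).} Every $\bs{\theta}_{d}\in\Omega^{d}$ lies in some $\Delta_{\alpha}$: sort its coordinates in nondecreasing order, breaking ties by index, and let $\alpha$ be the sorting permutation. Hence $\bigcup_{\alpha}\Delta_{\alpha}=\Omega^{d}$.

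\textbf{Step 2 (overlaps are null).} If $\alpha\neq\beta$, then some pair of indices $p\neq q$ appears in opposite orders in $\alpha$ and $\beta$. Any point of $\Delta_{\alpha}\cap\Delta_{\beta}$ therefore satisfies both $\theta_{p}\le\theta_{q}$ and $\theta_{q}\le\theta_{p}$, so $\theta_{p}=\theta_{q}$. This places $\Delta_{\alpha}\cap\Delta_{\beta}$ inside the hyperplane $\{\theta_{p}=\theta_{q}\}$, which has Lebesgue measure zero in $\Real^{d}$. Consequently
\[
\sum_{\alpha\in A_{d}}\mbb{I}_{\alpha}(\bs{\theta}_{d})=1
\]
for almost every $\bs{\theta}_{d}\in\Omega^{d}$. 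Equality holds precisely off the finite union of diagonal hyperplanes $\{\theta_{p}=\theta_{q}\}$, which is itself a null set.

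\textbf{Step 3 (integrate).} Multiply the almost-everywhere identity from Step 2 by $K$. Since $K$ is integrable and each $\mbb{I}_{\alpha}K$ is dominated by $|K|$, linearity of the Lebesgue integral over the finite sum gives
\[
\int_{\Omega^{d}}K=\sum_{\alpha}\int_{\Omega^{d}}\mbb{I}_{\alpha}K.
\]
The nested iterated integrals in the statement coincide with the Lebesgue integral over $\Omega^{d}$ by Fubini's theorem, again because $K$ is integrable.

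The only delicate point is Step 2: the simplices overlap on their boundaries because $\mbf{I}_{0}$ uses a non-strict inequality, so the indicators do not form a partition of unity pointwise. The fix is to argue that the overlaps have measure zero, as above, rather than to try to make them disjoint.
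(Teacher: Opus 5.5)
Your proof is correct, but it takes a different route from the paper's.

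\textbf{The paper's route.} The paper argues by induction on $d$. It writes the $(d+1)$-fold integral as an iterated integral and applies the hypothesis to the outer $d$ variables. Then, on the support of each $\mathbb{I}_{\alpha}$ (where $\theta_{\alpha_1}\le\cdots\le\theta_{\alpha_d}$), it splits the innermost integral over $\theta_{d+1}$ into the $d+1$ subintervals $[a,\theta_{\alpha_1}],\dots,[\theta_{\alpha_d},b]$. Each piece is identified with the permutation in $A_{d+1}$ obtained by inserting $d+1$ into the corresponding gap of $\alpha$. These insertions enumerate $A_{d+1}$ exactly once.

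\textbf{Your route.} You instead prove the almost-everywhere identity $\sum_{\alpha}\mathbb{I}_{\alpha}=1$ directly: sorting gives the covering, and overlaps are confined to the diagonal hyperplanes $\{\theta_p=\theta_q\}$. You then integrate that identity.

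\textbf{What each buys.} Your argument is non-inductive and is explicit about the one subtle point, namely that the closed ordered simplices overlap on ties because $\mathbf{I}_{0}$ uses a non-strict inequality. The paper handles this only implicitly, since splitting a one-dimensional integral at ordered points is exact (endpoints carry no mass). In exchange, the paper's induction exhibits the combinatorial bijection $A_d\times\{0,\dots,d\}\to A_{d+1}$ and the splitting at the ordered points $\theta_{\alpha_j}$. This is exactly the structure reused later: the $\int_{\theta_{\alpha_j}}^{\theta_{\alpha_{j+1}}}$ decomposition in the proof of Lem.~\ref{lem:vec}, and the order-preserving merges $\Xi_{\alpha,\beta}$ in the proof of Lem.~\ref{lem:polynomialProduct}.

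\textbf{On $A_d$.} You are right that $A_d$ must be the full set of $d!$ permutations. This is not an extra assumption, though. The notation section is most naturally read as defining $A_n$ to be exactly that subset of $\{1,\dots,n\}^n$, and the paper's induction uses it that way.
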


\begin{proof}
    Refer to Appendix.
\end{proof}

Since integral operators are linear, it is not difficult to see that the class of F-PI operators is closed under linear combinations. In particular, if $\K, \G :L_{2}[\Omega^{d}] \to \Real$ are F-PI operators defined by parameters $\{K_{\alpha}\}$ and $\{G_{\alpha}\}$, respectively, then $\lambda \K + \mu \G \in \Pi_{F}$ for any $\lambda, \mu \in \Real$, defined by $\{\lambda K_{\alpha} + \mu G_{\alpha}\}$. In addition, a tensor product of F-PI operators can be defined as follows, acting on tensor products of distributed monomials.

\begin{definition}
    For a given domain $\Omega :=[a,b]$ and $d,r \in \N$, let $\K:L_{2}[\Omega^{d}]\to\Real$ and $\G:L_{2}[\Omega^{r}]\to\Real$ be F-PI operators, respectively defined by parameters $\{K_{\alpha}\}_{\alpha\in A_{d}}$ and $\{G_{\alpha}\}_{\alpha\in A_{r}}$. We define $\K\otimes\G:L_{2}[\Omega^{d+r}]\to\Real$ by
    \begin{align*}
        (\K\otimes G)x
        = \hspace{-1mm} \sum_{\alpha\in A_{d+r}} \hspace{-1mm} \int_{\Omega^{d+r}} \hspace{-5mm} \mbb{I}_{\alpha}(\bs{\theta}_{d+r}) H_{\sigma(\alpha)}(\bs{\theta}_{d+r}) x(\bs{\theta}_{d+r}) d\bs{\theta}_{d+r}
    \end{align*}
    where $H_{\sigma(\alpha)}(\bs{\theta}_{d}, \bs{\theta}_{r}) = K_{\sigma_{1}(\alpha)}(\bs{\theta}_{d})G_{\sigma_{2}(\alpha)}(\bs{\theta}_{r})$ for each $\alpha\in A_{d}$, with $\sigma_{1}(\alpha):=\alpha\setminus\{d+1,\hdots,d+r\}\in A_{d}$ and $\sigma_{2}(\alpha):=\alpha\setminus\{1,\hdots,d\}-d\in A_{r}$.
\end{definition}

Given this tensor product of F-PI operators, the following Lemma shows that the product of distributed polynomials, parametrized by such F-PI operators, can be expressed in terms of a tensor product of the F-PI operators, acting on a tensor product of the distributed monomials.
\begin{lemma}\label{lem:polynomialProduct}
    For given $\Omega :=[a,b]$, $d \in \N$, define $\K:=[\K_{1} \hspace{2mm} \cdots \hspace{2mm} \K_{d}]$ and $\G:=[\G_{1} \hspace{2mm} \cdots \hspace{2mm} \G_{d}]$ by F-PI operators $\K_{i},\G_{i}:L_{2}[\Omega^{i}]\to\Real$. Let $\C:=[0 \hspace{2mm} \C_{2} \hspace{2mm} \cdots \hspace{2mm} \C_{2d}]$ where $\C_{k}:=\sum_{(i,j)\in \Sm_{k}}\K_{i}\otimes \G_{j}$ for $k\in\{2,\dots,2d\}$.
    If $p(x)=\K\Z_{d}(x)$ and $q(x)=\G\Z_{d}(x)$, then $p(x)q(x)=\C\Z_{2d}(x) \fa x \in L_{2}[\Omega]$.
\end{lemma}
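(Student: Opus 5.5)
The proof is a direct expansion: multiply the two finite sums, regroup by total degree, and show that each product of F-PI actions equals the tensor-product F-PI acting on the higher-degree monomial.

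\textbf{Expand and regroup.} By Definition~\ref{def:dispoly} with vanishing constant terms,
\[
p(x)q(x)=\sum_{i=1}^{d}\sum_{j=1}^{d}(\K_i x^{\otimes i})(\G_j x^{\otimes j}).
\]
Grouping terms by $k=i+j\in\{2,\dots,2d\}$ gives $\sum_{k=2}^{2d}\sum_{(i,j)\in\Sm_k}(\K_i x^{\otimes i})(\G_j x^{\otimes j})$. Here only pairs with $1\le i,j\le d$ are meant; the others are absent or treated as zero operators. So it suffices to prove the single-term identity
\[
(\K_i x^{\otimes i})(\G_j x^{\otimes j})=(\K_i\otimes\G_j)x^{\otimes(i+j)}.
\]
Summing this over $\Sm_k$ and then over $k$ yields $\C\Z_{2d}(x)$, with $\C_0=\C_1=0$.

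\textbf{The single-term identity.} By Lemma~\ref{lem:int_split} applied with the full permutation set, each F-PI functional can be rewritten as one integral over the whole cube:
\[
\K_i w=\int_{\Omega^i}\tilde K(\bs{\theta}_i)w(\bs{\theta}_i)\,d\bs{\theta}_i,
\qquad
\tilde K:=\sum_{\beta\in A_i}\mbb{I}_\beta K_\beta,
\]
and $\G_j$ similarly with kernel $\tilde G$. Since $x^{\otimes i}$ is the separable function $\prod_l x(\theta_l)$, Fubini gives
\[
(\K_i x^{\otimes i})(\G_j x^{\otimes j})=\int_{\Omega^{i+j}}\tilde K(\bs{\theta}_i)\,\tilde G(\bs{\theta}'_j)\,x^{\otimes(i+j)}(\bs{\theta}_i,\bs{\theta}'_j)\,d\bs{\theta}_{i+j}.
\]
Now apply Lemma~\ref{lem:int_split} on $\Omega^{i+j}$ to split this integral into ordered regions $\mbb{I}_\alpha$ for $\alpha\in A_{i+j}$. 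On the region of an ordering $\alpha$ of all $i+j$ variables, the induced orderings of the first $i$ and last $j$ variables are $\sigma_1(\alpha)$ and $\sigma_2(\alpha)$. Hence $\mbb{I}_\alpha\,\mbb{I}_\beta\,\mbb{I}_\gamma=\mbb{I}_\alpha$ when $\beta=\sigma_1(\alpha)$ and $\gamma=\sigma_2(\alpha)$, and $=0$ otherwise (up to measure-zero ties). Therefore
\[
\mbb{I}_\alpha\,\tilde K\,\tilde G=\mbb{I}_\alpha\,K_{\sigma_1(\alpha)}\,G_{\sigma_2(\alpha)}=\mbb{I}_\alpha H_{\sigma(\alpha)}
\]
almost everywhere, which is exactly the defining kernel of $\K_i\otimes\G_j$.

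\textbf{Main obstacle.} The hardest part is the bookkeeping with the permutation sets $A_d$. Two points need care:
\begin{itemize}
\item Lemma~\ref{lem:int_split} and the restriction identity above require $A_{i+j}$ to cover all orderings up to null sets, and the parameters $K_\beta$ for orderings outside the given $A_i$ must be set to zero.
\item The ties $\theta_p=\theta_q$, where several indicators equal one simultaneously, must be shown to have measure zero so they do not double count.
\end{itemize}
I would first prove the restriction identity $\mbb{I}_\alpha\mbb{I}_{\sigma_1(\alpha)}=\mbb{I}_\alpha$ for strictly ordered points and then invoke the measure-zero argument. Integrability is immediate because the kernels are bounded polynomials on a compact domain and $x\in L_2\subset L_1$ on $\Omega$, which also justifies the use of Fubini.
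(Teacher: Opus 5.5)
Your proof is correct and follows essentially the same route as the paper's: you expand the product, reduce to the single-term identity $(\K_i x^{\otimes i})(\G_j x^{\otimes j})=(\K_i\otimes\G_j)x^{\otimes(i+j)}$, and split $\Omega^{i+j}$ via Lemma~\ref{lem:int_split} into ordered regions on which only $K_{\sigma_1(\alpha)}G_{\sigma_2(\alpha)}$ survives. The only difference is bookkeeping: the paper keeps the double sum over $(\alpha,\beta)\in A_i\times A_j$ and collects the compatible interleavings $\Xi_{\alpha,\beta}\subseteq A_{i+j}$, whereas you first merge each kernel into a single function on the full cube (which is just linearity, not Lemma~\ref{lem:int_split}) and read off $(\sigma_1(\alpha),\sigma_2(\alpha))$ from each $\alpha$; your explicit treatment of ties and Fubini is, if anything, more careful than the paper's.
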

\begin{proof}
    Refer to Appendix.
\end{proof}

By Lemma~\ref{lem:polynomialProduct}, the class of distributed polynomials parametrized by F-PI operators is closed under the multiplication operation, thus forming a ring (like polynomials in finite variables). We now define the subset of F-PI operators with the particular form which arises from integrals over T-PI operators. This is referred to as the ``vectorization" of scalar T-PI operators.

\begin{definition} \label{def:vec}
   For $\Omega := [a,b]$, $i \in \{1,\dots,d\}$ and a given set of polynomial parameters $R_{i,1}, R_{i,2} \in \Real[s,\theta_{i}]$, define the associated functions $H_{\alpha} \in \Real[\bs{\theta}_{d}]$ for each $\alpha \in A_{d}$ as
    \begin{align*}
    	H_{\alpha}(\bs{\theta}_{d})
        &:= \hspace{-1mm} \sum_{j=0}^{d} \int_{\theta_{\alpha_{j}}}^{\theta_{\alpha_{j+1}}} \hspace{-2mm} R_{1,\gamma(1,j)}(s,\theta_{1}) \cdots R_{d,\gamma(d,j)}(s,\theta_{d}) \, ds
    \end{align*}
    where $\theta_{\alpha_{0}}:=a$, $\theta_{\alpha_{d+1}}:=b$. For a given tensor product $\Hop := \R_{1} \otimes \cdots \otimes \R_{d}: L_{2}[\Omega^{d}]\to L_{2}[\Omega]$, where each $\R_{i} = \Pop_{ \{R_{i,1}, R_{i,2} \} } \in \Pi_{2}$, define $\tnf{vec}(\Hop): L_{2}[\Omega^{d}] \to \Real$ as
    \begin{align*}
        \tnf{vec}(\Hop) x := \sum_{\alpha\in A_{d}} \int_{\Omega^{d}} \mathbb{I}_{\alpha}(\bs{\theta}_{d}) H_{\alpha}(\bs{\theta}_{d}) x(\bs{\theta}_{d}) \, d \bs{\theta}_{d}
    \end{align*}
    Then, for a given scalar T-PI operator $\Hop:= \sum_{i=1}^{m} \Hop_{i}$ where $\Hop_{i} = \R_{i,1} \otimes \cdots \otimes \R_{i,d}: L_{2}[\Omega^{d}] \to L_{2}[\Omega]$, we define $\tnf{vec}(\Hop):= \sum_{j=1}^{m} \tnf{vec}(\Hop_{i})$. 
\end{definition}

Note that the indicator functions $\gamma(\alpha_{k},j)$, $\mathbb{I}_{\alpha}(\bs{\theta}_{d})$ are defined in Section \ref{sec:not}. Finally, the following Lemma proves that the vectorization of a scalar T-PI operator corresponds to the integral over the scalar T-PI operator, enabling integrals over scalar T-PI operators to be represented as distributed polynomials.

\begin{lemma} \label{lem:vec}
    For a given domain $\Omega := [a,b]$, any scalar T-PI operator $\Hop: L_{2}[\Omega^{d}] \to L_{2}[\Omega]$ satisfies
    \begin{equation*}
        \int_{\Omega} (\Hop x)(s) \, ds = \tnf{vec}(\Hop) x \quad \fa x \in L_{2}[\Omega^{d}]
    \end{equation*}
\end{lemma}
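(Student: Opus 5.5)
By linearity of both sides in $\Hop$ (Definition~\ref{def:vec} sets $\tnf{vec}(\sum_i \Hop_i) = \sum_i \tnf{vec}(\Hop_i)$), it suffices to treat a single tensor product $\Hop = \R_{1} \otimes \cdots \otimes \R_{d}$ with $\R_{i} = \Pop_{\{R_{i,1},R_{i,2}\}} \in \Pi_{2}$. Both sides are also bounded linear functionals on $L_{2}[\Omega^{d}]$. Products $x_1 \otimes \cdots \otimes x_d$ span a dense subspace. So I would first verify the identity on such product functions and then extend it by linearity and continuity. Alternatively, I would write $(\Hop x)(s)$ directly as a single integral over $\Omega^{d}$ with kernel $\prod_i R_{i,\cdot}(s,\theta_i)$; this representation is valid for all $x$ by the same density argument.

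\textbf{Kernel representation.} For fixed $s$, I would write $(\R_i x_i)(s) = \int_\Omega \big(\mbf{I}_0(s-\theta_i) R_{i,1}(s,\theta_i) + (1-\mbf{I}_0(s-\theta_i)) R_{i,2}(s,\theta_i)\big) x_i(\theta_i)\, d\theta_i$. The boundary set $\{\theta_i = s\}$ has measure zero and can be ignored. Multiplying these $d$ expressions gives
\[
(\Hop x)(s) = \int_{\Omega^{d}} \prod_{i=1}^{d} \tilde R_i(s,\theta_i)\, x(\bs{\theta}_d)\, d\bs{\theta}_d ,
\]
where $\tilde R_i$ selects $R_{i,1}$ when $\theta_i \le s$ and $R_{i,2}$ when $\theta_i > s$. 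Integrating over $s \in \Omega$ and applying Fubini (kernels are polynomial hence bounded on the compact domain, and $x \in L_2 \subset L_1$ on a bounded domain) gives
\[
\int_\Omega (\Hop x)(s)\,ds = \int_{\Omega^d} \Big(\int_\Omega \prod_i \tilde R_i(s,\theta_i)\,ds\Big) x(\bs{\theta}_d)\, d\bs{\theta}_d .
\]

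\textbf{Splitting by orderings.} Next I would apply Lemma~\ref{lem:int_split} to split the outer integral into the regions where $\mbb{I}_{\alpha}(\bs{\theta}_d)=1$, that is, $\theta_{\alpha_1} \le \cdots \le \theta_{\alpha_d}$. This requires $A_d$ to be the full set of permutations, or at least a set partitioning $\Omega^d$ up to measure zero, as Lemma~\ref{lem:int_split} asserts.

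\textbf{Main step: the inner integral.} On each ordered region I would split the inner $s$-integral over the subintervals $[\theta_{\alpha_j}, \theta_{\alpha_{j+1}}]$, $j = 0,\dots,d$, with $\theta_{\alpha_0} = a$ and $\theta_{\alpha_{d+1}} = b$. On the $j$-th subinterval, $s$ lies above exactly the variables $\theta_{\alpha_1},\dots,\theta_{\alpha_j}$. Each of those variables therefore picks up $R_{\cdot,1}$, and the remaining ones pick up $R_{\cdot,2}$.

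The bookkeeping I expect to be the main obstacle is matching this with the index $\gamma(k,j)$ in Definition~\ref{def:vec}. The condition should read ``$k$ appears among $\alpha_1,\dots,\alpha_j$''. The notation $\gamma(\alpha_k,j)$ with the test $k \le j$ suggests the factor for $\theta_{\alpha_k}$ is $R_{\alpha_k,\gamma}$ with $\gamma = 1$ iff $k \le j$. I would carefully reindex the product $\prod_{k} R_{\alpha_k,\gamma(k,j)}(s,\theta_{\alpha_k})$ to confirm that it agrees with the definition's product, interpreting the definition accordingly. Summing over $j$ then yields exactly $H_\alpha(\bs{\theta}_d)$. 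Summing over $\alpha$ with the indicators $\mbb{I}_\alpha$ recovers $\tnf{vec}(\Hop)x$, which completes the proof.
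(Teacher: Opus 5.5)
Your proposal is correct and follows essentially the same route as the paper: express $\int_\Omega (\Hop x)(s)\,ds$ through the piecewise kernel $\prod_{i}\mbf{I}^{\beta_i}(s,\theta_i)R_{i,\beta_i}(s,\theta_i)$, move the $s$-integral inside, apply Lemma~\ref{lem:int_split}, and split the $s$-integral over the ordered subintervals $[\theta_{\alpha_j},\theta_{\alpha_{j+1}}]$ to recover $H_\alpha$ through $\gamma$. Your additional steps are sound and make explicit what the paper's proof leaves implicit: the linearity reduction to a single tensor product, the Fubini justification, the density extension from product functions to all of $L_2[\Omega^d]$, and the reading of $\gamma(i,j)=1$ as $i\in\{\alpha_1,\dots,\alpha_j\}$.
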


\begin{proof}
    Refer to Appendix.
\end{proof}

Linear combinations and multiplication of distributed polynomials, parametrized by F-PI operators, arise when constructing the distributed SOS program for local stability analysis of polynomial PDEs. The closure and vectorization results presented in this section enable the equality constraints to be enforced. 

\subsection{Example: Distributed Semi-algebraic Sets}
In the upcoming sections, we show how T-PI and F-PI operators can be used to parametrize distributed polynomials for the representation of polynomial PDEs and LFs. For now, we present an example of how these operators can be used to represent distributed semi-algebraic sets, such as the $L_{2}$-ball of radius $r$, as distributed polynomials.

For a domain $\Omega := [a,b]$ and $r>0$, define the ball of radius $r$ in the PDE domain
\begin{equation} \label{eq:Xbr}
    X_{B,r}[\Omega] := \left\{ u\in X_{B}[\Omega]: \hspace{1mm} r^{2} - \norm{u}{L_{2}}{2} \geq 0 \right\}
\end{equation}
Using the inverse map $\T: L_{2}[\Omega] \to X_{B}[\Omega] \in \Pi_{2}$, applying Def. \ref{def:TPI}, and Lem. \ref{lem:vec}, the constraint is equal to
\begin{align*}
    r^{2} - \norm{\T v}{L_{2}}{2} &= r^{2} - \int_{\Omega} (\T v)(s) (\T v)(s) ds \\
    &= r^{2} - \int_{\Omega} \left((\T \otimes \T)v^{\otimes 2}\right) \hspace{-1mm} (s) ds \\
    &= r^{2} - \underbrace{\tnf{vec} \left( \T \otimes \T \right)}_{\in \Pi_{F}} v^{\otimes 2} = \G \Z_{2}(v)
\end{align*}
a distributed polynomial in the fundamental state, where $\G = [r^{2} \hspace{2mm} 0 \hspace{2mm} -\tnf{vec}( \T \otimes \T)]$. As a result, the set \eqref{eq:Xbr} can be represented by the following distributed semi-algebraic set on the fundamental state
\begin{equation} \label{eq:L2r}
    L_{2,r}[\Omega] := \left\{ v \in L_{2}[\Omega] : \hspace{1mm} \G \Z_{2}(v) \geq 0 \right\}
\end{equation}
\vspace{-5mm}
%
%
\begin{figure*}[!t]
\setcounter{MYtempeqncnt}{\value{equation}} 
\setcounter{equation}{15}
\begin{equation} \label{eq:Qop}
    \Q := \hspace{-1mm} \begin{bmatrix}
          \Q_{11} & \Q_{12} & \cdots & \Q_{1d} \\
          \Q_{21} & \Q_{22} & \cdots & \Q_{2d} \\
          \vdots & \vdots & \ddots & \vdots \\
          \Q_{d1} & \Q_{d2} & \cdots & \Q_{dd}
      \end{bmatrix}  
      \hspace{-1mm} := \hspace{-1mm}
      \begin{bmatrix}
          \hat{\U} &  &  &  & \\
             & \hat{\U}^{\otimes 2} &  &  & \\
             &  &  \ddots &  & \\
             & & & & \hat{\U}^{\otimes d}
      \end{bmatrix}^{*}
      \hspace{-1mm}
      \circ
      \underbrace{\begin{bmatrix}
          \bar{Q}_{11} & \bar{Q}_{12} & \cdots & \bar{Q}_{1d} \\
          \bar{Q}_{21} & \bar{Q}_{22} & \cdots & \bar{Q}_{2d} \\
          \vdots & \vdots & \ddots & \vdots \\
          \bar{Q}_{d1} & \bar{Q}_{d2} & \cdots & \bar{Q}_{dd}
      \end{bmatrix}}_{:= \bar{Q}}
      \circ
      \underbrace{\begin{bmatrix}
          \hat{\U} &  &  &  & \\
             & \hat{\U}^{\otimes 2} &  &  & \\
             &  &  \ddots &  & \\
             & & & & \hat{\U}^{\otimes d}
      \end{bmatrix}}_{:= \hat{U}_{d}}
      \vspace{-2mm}
\end{equation}
\setcounter{equation}{\value{MYtempeqncnt}} 
\hrulefill 
\vspace{-5mm}
\end{figure*}
%
%
\section{A PIE Representation of Polynomial PDEs} \label{sec:PIEs}
We now revisit the PIE representation of the evolution of $v$ in \eqref{eq:pie1}. Using the T-PI operators introduced in the previous section, this PIE can be expressed as a distributed polynomial. In particular, we show that the evolution of $v$ is governed by a distributed polynomial of the form
\begin{equation} \label{eq:pie}
    \T \p{v}{t}{}(t,s) = \C \Z_{d}(v) \quad s \in \Omega
\end{equation}
where $v(t) \in L_{2}[\Omega]$, $\T \in \Pi_{2}$, and $\C := [\C_{1} \hspace{1mm} \dots \hspace{1mm} \C_{d}]$ is parametrized by T-PI operators $\C_{k}: L_{2}[\Omega^{k}] \to L_{2}[\Omega] \in \Pi_{T}$. With reference to the polynomial PDE defined by \eqref{eq:pde}-\eqref{eq:X_B}, the operators $\C_{k}$ are given by
\begin{equation} \label{eq:ck}
    \C_{k} := \sum_{i=1}^{m}  \mathbf{1}(k - \alpha_{i}) \M_{c_{i}} \circ \Hop_{i}
\end{equation}
where $\alpha_{i} := \sum_{j=0}^{n} \alpha_{ij}$ denotes the sum over row $i$ of matrix $A$; $d = \max \{\alpha_{1}, \dots, \alpha_{m} \}$; $\M_{c_{i}} := \Pop_{ \{c_{i},0,0\} }$ is the multiplier operator associated to $c_{i} \in L_{\infty}[\Omega]$; and $\Hop_{i} := \T^{\otimes \alpha_{i0}} \otimes \R_{1} ^{\otimes \alpha_{i1}} \otimes \dots \otimes \R_{n}^{\otimes \alpha_{in}}$ is the T-PI operator associated to the monomial $\U_{i}$, also defined by $A$. We note here $\M_{c_{i}} \circ \Hop_{i} \in \Pi_{T}$ follows from the fact that, by definition, $\Pi_{T}$ is closed under linear combinations.

\begin{definition}[Classical solution to the PIE] \label{def:pie_sol}
    For a given $v_{0} \in L_{2}[\Omega]$, $v$ is a classical solution to the polynomial PIE defined by $\{\T, \C \}$ if $v$ is Fr\'echet differentiable, $v(0)=v_{0}$, and \eqref{eq:pie} is satisfied by $v(t) \in L_{2}[\Omega] \fa t \geq 0$.
\end{definition}

The following Lemma proves there exists an invertible map between classical solutions to the polynomial PDE \eqref{eq:pde} and classical solutions to the associated PIE \eqref{eq:pie}.

\begin{lemma}
    Suppose that $B \in \Real^{n \times 2n}$ satisfies Lemma \ref{lem:fr} and for $j \in \{0, \dots, n\}$ the operators $\T, \R_{j} \in \Pi_{3}$ are defined accordingly. Let $c \in L_{\infty}^{m}[\Omega]$, $A \in \N^{m \times (n+1)}$ and define the operators $\C_{k}$ as in \eqref{eq:ck}. Then, $v$ is a classical solution to the PIE, defined by $\{ \T, \C\}$, with initial state $v_{0}$ if and only if $\T v$ is a classical solution to the PDE, defined by $\{ c, B, A\}$, with initial state $\T v_{0}$. Conversely, $u$ is a classical solution to the polynomial PDE, defined by $\{c,B,A\}$, with initial state $u_{0}$ if and only if $\p{u}{s}{n}$ is a classical solution to the polynomial PIE, defined by $\{\T,\C\}$, with initial state $(u_{0})\p{}{s}{n}$.
\end{lemma}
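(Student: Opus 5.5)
The plan is to reduce both directions to two facts: the pointwise identity between the polynomial right-hand sides, and the fact that $\T$ and $\partial_s^n$ are mutually inverse bijections between $X_B[\Omega]$ and $L_2[\Omega]$ (Lemma~\ref{lem:fr}).

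\textbf{Algebraic identity.} First we show that for every $v \in L_2[\Omega]$ and a.e.\ $s\in\Omega$,
\begin{equation*}
    (\C\Z_d(v))(s) = c(s)^{\top}\U\big(\T v,\p{(\T v)}{s}{},\dots,\p{(\T v)}{s}{n}\big)(s).
\end{equation*}
By Def.~\ref{def:TPI}, the operator $\Hop_i = \T^{\otimes\alpha_{i0}}\otimes\R_1^{\otimes\alpha_{i1}}\otimes\cdots\otimes\R_n^{\otimes\alpha_{in}}$ acting on $v^{\otimes\alpha_i}$ gives the pointwise product
\begin{equation*}
    (\T v)(s)^{\alpha_{i0}}(\R_1 v)(s)^{\alpha_{i1}}\cdots(\R_n v)(s)^{\alpha_{in}},
\end{equation*}
since each factor is scalar-valued, so the Kronecker product reduces to an ordinary product. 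By \eqref{eq:derivs}, $\R_j v = \partial_s^j(\T v)$, and $\R_n$ is the identity. Hence $(\Hop_i v^{\otimes\alpha_i})(s) = \U_i(\T v,\dots,\p{(\T v)}{s}{n})(s)$. Applying $\M_{c_i}$ multiplies this by $c_i(s)$. Summing over $k$, the indicator $\mbf{1}(k-\alpha_i)$ selects exactly one $k=\alpha_i$ for each $i$, and $d\ge\alpha_i$ ensures that term appears in the sum. So $\sum_k \C_k v^{\otimes k} = \sum_i c_i\U_i$. One caveat: if some monomial is constant ($\alpha_i=0$), it would need a $\C_0$ term. I would note this, or assume $\alpha_i\ge1$ as the indexing of $\C$ suggests.

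\textbf{Time derivative.} Next we need $\p{(\T v)}{t}{} = \T\p{v}{t}{}$. Since $\T$ is a bounded linear operator on $L_2$ (a 2-PI operator with $L_2$ kernels), $t\mapsto \T v(t)$ is Fréchet differentiable whenever $v$ is, and its derivative is $\T\dot v$. For the converse, if $u$ is Fréchet differentiable in $X_B$ with the $H_2^n$ topology, then $v=\partial_s^n u$ is Fréchet differentiable in $L_2$, because $\partial_s^n: H_2^n\to L_2$ is bounded. Then $\dot v = \partial_s^n\dot u$, and $\T\dot v = \dot u$ because $\dot u(t)\in X_B$: the subspace $X_B$ is closed and linear, so difference quotients stay in it.

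\textbf{Assembly.} For the first statement, suppose $v$ solves the PIE. Set $u=\T v$, so $u(t)\in X_B$ by Lemma~\ref{lem:fr} and $u(0)=\T v_0$. The PIE then reads $u_t = c^\top\U(u,\dots,\p{u}{s}{n})$ by the identity above. Conversely, if $\T v$ solves the PDE, the same identity gives $\T v_t = \C\Z_d(v)$. For the second statement, take $v=\p{u}{s}{n}$. Then $\T v = u$ by \eqref{eq:derivs}, so we reduce to the first statement with $v_0 = (u_0)_{s^n}$. In the reverse direction, if $\p{u}{s}{n}$ solves the PIE, then $\T\p{u}{s}{n} = u$ solves the PDE.

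\textbf{Main obstacle.} The delicate step is the regularity and topology bookkeeping. We must check that Fréchet differentiability transfers correctly between $X_B$ (with the $H_2^n$ norm) and $L_2$. This requires that $\T: L_2\to H_2^n$ be bounded, not merely $L_2\to L_2$. I would obtain this from $\R_j=\partial_s^j\T$ being bounded 2-PI operators for each $j\le n$. I would also check that the polynomial nonlinearity is well defined pointwise a.e.; this holds because $\R_j v$ is continuous for $j<n$ and only $\R_n v$ is merely in $L_2$.
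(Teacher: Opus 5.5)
Your proposal is correct and follows the same route as the paper. You expand $\C\Z_{d}(v)$ through the T-PI definition into $\sum_{i} c_{i}(s)(\T v)^{\alpha_{i0}}(\R_{1} v)^{\alpha_{i1}}\cdots(\R_{n} v)^{\alpha_{in}}$, rewrite it via \eqref{eq:derivs} as $c^{\top}\U(\T v,\dots,\p{(\T v)}{s}{n})$, and pass between $u$ and $v$ using that $\T$ inverts $\partial_{s}^{n}$ on $X_{B}$; your extra bookkeeping (commuting $\partial_{t}$ with $\T$ via boundedness of $\T: L_{2}\to H_{2}^{n}$, fixing the $H_{2}^{n}$ topology for Fr\'echet differentiability of $u$, and noting that a constant monomial with $\alpha_{i}=0$ would need a $\C_{0}$ term) makes explicit steps the paper's proof leaves implicit, without changing the argument.
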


\begin{proof} By Lem. \ref{lem:fr}, the operators $\T, \R_{j} \in \Pi_{3}$ for $j \in \{0, 1, \dots, n\}$ exist and are defined, along with operators $\C_{k} \in \Pi_{T}$ for $k \in \{1, \dots, d\}$, as defined in \eqref{eq:ck}. Exploiting these definitions and that of a T-PI operator, the PIE becomes
\begin{align*}
    \p{(\T v)}{t}{}(t,s) \hspace{-1mm} &= \hspace{-1mm} \sum_{k=1}^{d} (\C_{k}[v^{\otimes k}])(t,s) = \sum_{i=1}^{m} (\M_{c_{i}} \circ \Hop_{i} [v^{\otimes \alpha_{i}}])(\cdot) \\
    &= \hspace{-1mm} \sum_{i=1}^{m} c_{i}(s) (\T v)^{\alpha_{i0}}(\cdot) (\R_{1} v)^{\alpha_{i1}}(\cdot) \hspace{-1mm} \dots \hspace{-1mm} (\R_{n} v)^{\alpha_{in}}(\cdot)
\end{align*}
Also by Lem. \ref{lem:fr}, relations \eqref{eq:derivs} are leveraged to express the PIE in the same form as the PDE.
\begin{equation*}
    \p{(\T v)}{t}{}(t,s) = c(s)^{\top} \U \left(\T v, \p{(\T v)}{s}{}, \dots, \p{(\T v)}{s}{n} \right)
\end{equation*}
It follows that, for any $v_0 \in L_2[\Omega]$, $v$ is a classical solution to the PIE defined by $\{\T,\C\}$ if and only if $\T v_0 \in X_B[\Omega]$ and $\T v$ is a classical solution to the PDE defined by $\{c,B,A\}$.

Conversely, for $j \in \{0, 1, \dots, n\}$ the operators $\T, \R_{j} \in \Pi_{3}$ exist and are defined by Lem. \ref{lem:fr}. Using the relations \eqref{eq:derivs}, the PDE becomes
\begin{align*}
    (\T \p{u}{s}{n})_{t}(t,s) &= c(s)^{\top} \U( \T \p{u}{s}{n}, \R_{1} \p{u}{s}{n}, \dots, \R_{n} \p{u}{s}{n}) \\
    &= \sum_{i=1}^{m} c_{i}(s) (\T \p{u}{s}{n})^{\alpha_{i0}} (\R_{1} \p{u}{s}{n})^{\alpha_{i1}} \hspace{-5mm} \dots (\R_{n} \p{u}{s}{n})^{\alpha_{in}}
\end{align*}
Defining operators $\Hop_{i}$, $\M_{i}$, and $\C_{k}$, as in \eqref{eq:ck}, results in the PDE \eqref{eq:pde} expressed in the same form as the PIE \eqref{eq:pie}.
\begin{align*}
    (\T \p{u}{s}{n})_{t}(t,s) &= \sum_{i=1}^{m} (\M_{c_{i}} \circ \Hop_{i}[ \p{u}{s}{n}^{\otimes \alpha_{i}}])(t,s) = \C \Z_{d}(\p{u}{s}{n})
\end{align*}
It follows that for a given $u_{0} \in X_{B}[\Omega]$, $u$ is a classical solution to the PDE, defined by $\{c, B, A\}$, if and only if $(u_{0})_{s_{n}} \in L_{2}[\Omega]$ and $\p{u}{s}{n}$ is a classical solution to the PIE, defined by $\{\T,\C\}$. 
\end{proof}

\section{SOS Polynomials on a Distributed State}
Inspired by SOSTOOLS \cite{prajna2002introducing}, which builds on SOS polynomials for the analysis of polynomial ODEs, we introduce the notion of a distributed SOS polynomial. This class allows us to enforce positivity of distributed polynomials by definition and through equality constraints. In the subsequent section, we employ distributed SOS polynomials as LFs and formulate an SOS program to certify exponential stability of the equilibrium solution, $u \equiv 0$, on an $L_{2}$-ball of radius $r$.

\begin{definition} \label{def:SOSpoly}
    For a given domain $\Omega := [a,b]$, let the vector of degree $\bar{d} \in \N$ monomials in $s, \theta \in \Omega$ be denoted by $\U_{\bar{d}}(s, \theta) \in \Real^{\mu(\bar{d})}[s,\theta]$, where $\mu(\bar{d}):= \frac{1}{2}(\bar{d}+1)(\bar{d}+2)$. Define
    \begin{align*}
    (\hat{\U} x)(s) &:=
    \begin{bmatrix}
    \int_a^s \U_{\bar{d}}(s,\theta)x(\theta) \, d\theta \\
    \int_s^b \U_{\bar{d}}(s,\theta)x(\theta) \, d\theta
    \end{bmatrix}
    \in \Pi_{2}^{2\mu(\bar{d})}
    \end{align*}
    We say that $q:L_{2}[\Omega]\to\mathbb{R}$ is a distributed SOS polynomial in $x$, of degree $2d\in\N$, if there exists $\bar{d}\in\N$ and a real matrix $\bar{Q}=\bar{Q}'\succeq 0$ of appropriate dimension such that
    \begin{equation} \label{eq:distsospoly}
        \hspace{-2mm} q(x) := \inner{\Z_{d}(x)}{\Q \Z_{d}(x)}_{L_{2}}
    \end{equation}
    with $\Q$ defined in \eqref{eq:Qop}. In which case, we write $q \in \Sigma_{2d}$.
\end{definition}

\addtocounter{equation}{1}

It should be observed that $\hat{\U}^{\otimes i}: L_{2}[\Omega^{i}] \to L_{2}^{(2 \mu(\bar{d}))^{i}}[\Omega]$ is a T-PI operator, each matrix $\bar{Q}_{ij}$ has dimension ${(2 \mu(\bar{d}))^{i} \times (2 \mu(\bar{d}))^{j}}$, and $\Q$ is a self-adjoint operator since $\bar{Q} = \bar{Q}'$. Furthermore, the distributed SOS polynomial can be expressed as a sum of quadratic forms
\begin{equation*}
    q(x) = \sum_{i,j=1}^{d} \inner{x^{\otimes i}}{\Q_{ij} x^{\otimes j}}_{L_{2}}
\end{equation*}
where $\Q_{ij} = (\hat{\U}^{\otimes i})^{*} \circ \bar{Q}_{ij} \circ \hat{\U}^{\otimes j}: L_{2}[\Omega^{j}] \to L_{2}[\Omega^{i}]$. The adjoint of a T-PI operator and the compositions in $\Q_{ij}$ are all well-defined; however, we do not provide closed-form expressions to compute them. By parametrizing distributed SOS polynomials in a quadratic manner, as in Def.~\ref{def:SOSpoly}, it is clear that any $q \in \Sigma_{2d}$ satisfies $q(x)\geq 0$ for all $x \in L_{2}[\Omega]$. However, since the quadratic representation of polynomials is not unique, it must be converted to the linear form for enforcing equality constraints. As a result, the adjoint and compositions in \eqref{eq:Qop} will not need to be computed explicitly since distributed SOS polynomials will always be implemented in linear form. The remainder of this section shows how to perform this conversion, thereby also verifying that elements of $\Sigma_{2d}$ are indeed distributed polynomials as per Def.~\ref{def:dispoly}. The next definition shows how the individual quadratic terms can be evaluated by defining the ``vectorization" of a distributed SOS polynomial.

\begin{definition} \label{def:vecop}
    For a given domain $\Omega := [a,b]$ and operator $\hat{\Q}: L_{2}[\Omega^{j}] \to L_{2}[\Omega^{i}]$ of the form
    \begin{equation*}
        \hat{\Q} = \underbrace{(\hat{\U}_{L,1} \otimes \cdots \otimes \hat{\U}_{L,i})^{*}}_{:=\hat{\U}_{L}} \circ \hspace{1mm} \hat{Q} \circ \underbrace{(\hat{\U}_{R,1} \otimes \cdots \otimes \hat{\U}_{R,j})}_{:= \hat{\U}_{R}}
    \end{equation*}
    where $\hat{\U}_{L,\ell} \in \Pi_{2}^{m}$, $\hat{\U}_{R,r} \in \Pi_{2}^{n}$, and $\hat{Q} \in \Real^{m^{i} \times n^{j}}$. Define
    \begin{align*}
        \tnf{vec}(\hat{\Q}) &:= \sum_{p=1}^{m^{i}} \sum_{q=1}^{n^{j}} \hat{Q}_{pq} \tnf{vec} \left( (\hat{\U}_{L})_{p} \otimes (\hat{\U}_{R})_{q} \right)
    \end{align*}
\end{definition}

It should be observed that in this definition, $\hat{\U}_{L}, \hat{\U}_{R} \in \Pi_{T}$, and that $\tnf{vec}(\hat{\Q})$ may be computed directly from the components $\hat{Q}$, $\hat{\U}_{L}$ and $\hat{\U}_{R}$. Combining Def. \ref{def:vecop} and Lem. \ref{lem:vec}, each quadratic term can be converted to the standard form of a distributed polynomial (Def. \ref{def:dispoly}). This is summarised in the following Lemma.

\begin{lemma} \label{lem:quad_ind}
    For a given domain $\Omega := [a,b]$ and any $\hat{\Q}: L_{2}[\Omega^{j}] \to L_{2}[\Omega^{i}]$ as defined in Def. \ref{def:vecop}
    \begin{equation} \label{eq:lem_vecop}
        \inner{z}{\hat{\Q} y}_{L_{2}} = \tnf{vec}(\hat{\Q}) (z \otimes y) 
    \end{equation}
    $\fa z \in L_{2}[\Omega^{i}]$ and $y \in L_{2}[\Omega^{j}]$.
\end{lemma}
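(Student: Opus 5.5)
The plan is to unfold the inner product, pass the adjoint $\hat{\U}_{L}^{*}$ across it, and then reduce the resulting spatial integral to Lem.~\ref{lem:vec} one matrix entry at a time.

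\textbf{Step 1 (adjoint).} By the definition of the adjoint,
\[
\inner{z}{\hat{\Q} y}_{L_{2}} = \inner{\hat{\U}_{L} z}{\hat{Q}\,\hat{\U}_{R} y}_{L_{2}} = \int_{\Omega} (\hat{\U}_{L} z)(s)^{\top} \hat{Q}\, (\hat{\U}_{R} y)(s)\, ds .
\]
Here $\hat{\U}_{L} z \in L_{2}^{m^{i}}[\Omega]$ and $\hat{\U}_{R} y \in L_{2}^{n^{j}}[\Omega]$, so the inner product is now an $L_{2}[\Omega]$ inner product in the single variable $s$.

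\textbf{Step 2 (expand entrywise).} Expanding the matrix product gives
\[
\sum_{p=1}^{m^{i}}\sum_{q=1}^{n^{j}} \hat{Q}_{pq}\int_{\Omega} (\hat{\U}_{L})_{p} z\,(s)\; (\hat{\U}_{R})_{q} y\,(s)\, ds .
\]
By linearity of $\tnf{vec}$ and of both sides in $z \otimes y$, it suffices to prove the identity for each fixed pair $(p,q)$.

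\textbf{Step 3 (identify a scalar T-PI operator).} Each $(\hat{\U}_{L})_{p}$ is a scalar row of a tensor product of 2-PI operators. Concretely, it is $\R_{1}\otimes\cdots\otimes\R_{i}$, where each $\R_{\ell}$ is the scalar 2-PI operator given by one row of $\hat{\U}_{L,\ell}$. The analogous statement holds for $(\hat{\U}_{R})_{q}$. The pointwise product in $s$ is then exactly the action of the scalar T-PI operator $(\hat{\U}_{L})_{p}\otimes(\hat{\U}_{R})_{q} : L_{2}[\Omega^{i+j}]\to L_{2}[\Omega]$ on $z\otimes y$.

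\textbf{Step 4 (the obstacle).} The difficulty is that $z$ and $y$ are general elements of $L_{2}[\Omega^{i}]$ and $L_{2}[\Omega^{j}]$, not elementary tensors. Def.~\ref{def:TPI} only specifies the T-PI action on elementary tensors $x_{1}\otimes\cdots\otimes x_{d}$. To handle this:
\begin{itemize}
\item Write the operator explicitly as an integral kernel, $\int_{\Omega^{i+j}} \prod_{\ell} R_{\ell,\gamma}(s,\theta_{\ell})\,(z\otimes y)(\boldsymbol{\theta})\,d\boldsymbol{\theta}$. Its kernel is polynomial, hence bounded on $\Omega^{i+j+1}$.
\item This integral formula agrees with Def.~\ref{def:TPI} on elementary tensors.
\item Both sides of \eqref{eq:lem_vecop} are continuous linear functionals of $w = z\otimes y \in L_{2}[\Omega^{i+j}]$.
\item Linear combinations of elementary tensors are dense in $L_{2}[\Omega^{i+j}]$.
\end{itemize}
Together these justify applying the elementary-tensor formula to $z \otimes y$ with general $z,y$.

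\textbf{Step 5 (conclude).} Once the operator is realized as this integral, Lem.~\ref{lem:vec} gives
\[
\int_{\Omega}\bigl(((\hat{\U}_{L})_{p}\otimes(\hat{\U}_{R})_{q})(z\otimes y)\bigr)(s)\,ds = \tnf{vec}\bigl((\hat{\U}_{L})_{p}\otimes(\hat{\U}_{R})_{q}\bigr)(z\otimes y).
\]
Multiplying by $\hat{Q}_{pq}$ and summing over $p,q$ reproduces $\tnf{vec}(\hat{\Q})(z\otimes y)$ by Def.~\ref{def:vecop}, which proves the lemma.
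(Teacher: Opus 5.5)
Your proposal is correct and follows the paper's proof essentially step for step: pass $\hat{\U}_{L}^{*}$ across the inner product, expand entrywise in $\hat{Q}_{pq}$, recognize each integrand as $(\hat{\U}_{L})_{p}\otimes(\hat{\U}_{R})_{q}$ acting on $z\otimes y$ via Def.~\ref{def:TPI}, apply Lem.~\ref{lem:vec}, and resum using Def.~\ref{def:vecop}. Your Step 4 (bounded polynomial kernels plus density of elementary tensors, which is cleanest stated as two continuous bilinear forms in $(z,y)$ agreeing on elementary tensors, rather than as functionals of $w$) supplies a justification that the paper silently skips when it applies Def.~\ref{def:TPI} to non-elementary $z,y$.
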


\begin{proof}
    By Def. \ref{def:vecop} and expanding the inner product, for any $z \in L_{2}[\Omega^{i}]$ and $y \in L_{2}[\Omega^{j}]$
    \begin{align*}
        \inner{z}{\hat{\Q} y}_{L_{2}} &= \inner{\hat{\U}_{L} z}{\hat{Q} \circ \hat{\U}_{R} y}_{L_{2}} \\
        &= \sum_{p=1}^{m^{i}} \sum_{q=1}^{n^{j}} \hat{Q}_{pq} \int_{\Omega} ((\hat{\U}_{L})_{p} z)(s) ((\hat{\U}_{R})_{q} y)(s) d s
    \end{align*}
    As $(\hat{\U}_{L})_{p}: L_{2}[\Omega^{i}] \to L_{2}[\Omega]$ and $(\hat{\U}_{R})_{q}: L_{2}[\Omega^{j}] \to L_{2}[\Omega]$ are both scalar T-PI operators, applying Def. \ref{def:TPI} leads to
    \begin{align*}
        &= \sum_{p=1}^{m^{i}} \sum_{q=1}^{n^{j}} \hat{Q}_{pq} \int_{\Omega} \left( \big( (\hat{\U}_{L})_{p}  \otimes (\hat{\U}_{R})_{q} \big) [z \otimes y]\right)(s) d s
    \end{align*}
    Applying Lem. \ref{lem:vec}
    \begin{align*}
        &= \sum_{p=1}^{m^{i}} \sum_{q=1}^{n^{j}} \hat{Q}_{pq} \tnf{vec} \big( (\hat{\U}_{L})_{p}  \otimes (\hat{\U}_{R})_{q} \big) [z \otimes y]
    \end{align*}
    followed by Def. \ref{def:vecop} results in \eqref{eq:lem_vecop}.
\end{proof}

This section concludes by converting the distributed SOS polynomial, defined in quadratic form, to the standard (linear) form of Def. \ref{def:dispoly}.

\begin{lemma} \label{lem:lin_ppi}
    For a given domain $\Omega := [a,b]$, a distributed SOS polynomial $q \in \Sigma_{2d}$, as defined in Def. \ref{def:SOSpoly}, satisfies
    \begin{align} \label{eq:SOSpoly_linear}
        q(x) = \tnf{vec}(\Q) \Z_{2d}(x)
    \end{align}
    where $\tnf{vec}(\Q) := [0 \hspace{2mm} \Q_{2} \hspace{2mm} \cdots \hspace{2mm} \Q_{2d} ]$ contains F-PI operators $\Q_{w}: L_{2}[\Omega^{w}] \to \Real$ of the form
    \begin{align*}
        \Q_{w} &= \sum_{(i,j) \in \Sm_{w}} \tnf{vec}(\Q_{ij})
    \end{align*}
    and the F-PI operators, $\tnf{vec}(\Q_{ij})$, are such that $\inner{x^{\otimes i}}{\Q_{ij} x^{\otimes j}}_{L_{2}} = \tnf{vec}(\Q_{ij}) (x^{\otimes i} \otimes x^{\otimes j})$ $\fa x \in L_{2}[\Omega]$.
\end{lemma}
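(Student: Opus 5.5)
The proof starts from the expansion of $q$ stated just after Def.~\ref{def:SOSpoly}:
\begin{equation*}
q(x) = \sum_{i,j=1}^{d} \inner{x^{\otimes i}}{\Q_{ij} x^{\otimes j}}_{L_{2}} .
\end{equation*}
This follows from the block structure of $\Q$ in \eqref{eq:Qop}. Since $\Z_{d}(x)$ omits the degree-$0$ term here, it is a block vector with entries $x^{\otimes i}$, and the $L_2$ inner product on the block vector space splits as a sum over block pairs $(i,j)$.

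Next I check that each block $\Q_{ij} = (\hat{\U}^{\otimes i})^{*} \circ \bar{Q}_{ij} \circ \hat{\U}^{\otimes j}$ has exactly the form required in Def.~\ref{def:vecop}. Take $\hat{\U}_{L,\ell} = \hat{\U}_{R,r} = \hat{\U} \in \Pi_{2}^{2\mu(\bar d)}$, so that $m = n = 2\mu(\bar d)$, and take $\hat{Q} = \bar{Q}_{ij} \in \Real^{m^{i}\times m^{j}}$. Lemma~\ref{lem:quad_ind} then applies with $z = x^{\otimes i}$ and $y = x^{\otimes j}$. It gives
\begin{equation*}
\inner{x^{\otimes i}}{\Q_{ij} x^{\otimes j}}_{L_{2}} = \tnf{vec}(\Q_{ij})(x^{\otimes i}\otimes x^{\otimes j}),
\end{equation*}
which is the last claim of the lemma. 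I also need $\tnf{vec}(\Q_{ij})$ to be an F-PI operator. By Def.~\ref{def:vecop}, it is a finite real linear combination of vectorizations $\tnf{vec}\big((\hat{\U}_{L})_p\otimes(\hat{\U}_{R})_q\big)$. Each of these has the form of Def.~\ref{def:vec}, with polynomial kernels $H_\alpha$, because the parameters of $\hat{\U}$ are monomials. F-PI operators are closed under linear combinations, so $\tnf{vec}(\Q_{ij})$ is F-PI.

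Then I regroup. By definition of the tensor product, $x^{\otimes i}\otimes x^{\otimes j} = x^{\otimes (i+j)}$ as elements of $L_{2}[\Omega^{i+j}]$, with variables ordered $(\bs{\theta}_i,\bs{\theta}_j)$. Collecting the pairs with $i+j=w$ gives
\begin{equation*}
q(x) = \sum_{w=2}^{2d} \Big(\sum_{(i,j)\in\Sm_w} \tnf{vec}(\Q_{ij})\Big) x^{\otimes w} = \sum_{w=2}^{2d}\Q_w x^{\otimes w}.
\end{equation*}
Here $\Sm_w$ is restricted to $1\le i,j\le d$. The degree-$0$ coefficient is zero, and no degree-$1$ term occurs because $i,j\ge1$; the statement's $\tnf{vec}(\Q)=[0\ \Q_2\ \cdots\ \Q_{2d}]$ records this. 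Each $\Q_w$ is F-PI by closure under sums, and this is exactly \eqref{eq:SOSpoly_linear}.

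The main obstacle I expect is bookkeeping rather than analysis. First, $\Sm_w$ as defined in Section~\ref{sec:not} includes pairs with $i=0$ or $i>d$, so it must be read as restricted to $1\le i,j\le d$, or equivalently those blocks are taken as zero. Second, the identification $x^{\otimes i}\otimes x^{\otimes j}=x^{\otimes(i+j)}$ needs the variable ordering in $\tnf{vec}(\Q_{ij})$ to match that of $x^{\otimes w}$, which holds by construction of the tensor product. Finally, applying Lemma~\ref{lem:quad_ind} requires the component-wise identification $(\hat{\U}^{\otimes i})_p$ as scalar T-PI operators. This is immediate from Def.~\ref{def:TPI}, since each row of a Kronecker product of 2-PI operators is a tensor product of scalar rows.
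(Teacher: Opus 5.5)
Your proposal is correct and follows the paper's proof. You expand $q$ into the quadratic terms $\inner{x^{\otimes i}}{\Q_{ij}x^{\otimes j}}_{L_2}$, apply Lemma~\ref{lem:quad_ind} to each, and regroup by total degree $w=i+j$ using closure of F-PI operators under linear combinations. Your extra checks only spell out what the paper leaves implicit: that $\Q_{ij}$ fits Def.~\ref{def:vecop} with $\hat{\U}_{L,\ell}=\hat{\U}_{R,r}=\hat{\U}$, that the kernels are polynomial so $\tnf{vec}(\Q_{ij})\in\Pi_F$, and that $\Sm_w$ must be restricted to $1\le i,j\le d$.
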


\begin{proof}
    Starting from Def. \ref{def:SOSpoly} and applying Lem. \ref{lem:quad_ind}, with $\tnf{vec}(\Q_{ij})$ defined in Def. \ref{def:vecop},
    \begin{align*}
        q(x) &= \sum_{i,j=1}^{d} \tnf{vec} (\Q_{ij}) (x^{\otimes i} \otimes x^{\otimes j}) 
    \end{align*}
    Since F-PI operators are closed under linear combinations, operators acting on the same degree distributed monomial can be summed (i.e, summing over $\Sm_{w}$ as defined in Section \ref{sec:not}) to get \eqref{eq:SOSpoly_linear}.
\end{proof}

Armed with this Lemma, we can now define a LF as a distributed SOS polynomial and enforce negativity of its derivative through an equality constraint, as shown next.

\section{Distributed SOS Program For Local Stability Analysis of Polynomial PDEs} \label{sec:local}
In the previous sections, we have shown how polynomial PIEs and semi-algebraic sets can be represented as distributed polynomials. Furthermore, we have also parameterized a class of nonnegative distributed polynomials, as distributed SOS polynomials. Using these results, we now show how local stability of the equilibrium $u \equiv 0$ for a PDE as in~\eqref{eq:pde} can be verified in the associated PIE representation~\eqref{eq:pie}, by solving a distributed SOS program. Specifically, to verify stability, we look for a distributed SOS polynomial LF, $V(v)$, which decays towards zero on a given semi-algebraic region. We test this decay by verifying negativity of the Lie derivative, defined as follows.

\begin{definition}
    For a given polynomial PIE defined by $\{\T,\C\}$ and distributed polynomial, $V(v)$, define the Lie derivative $\mathcal{L}_{\T,\C}V(v)$ such that $\mathcal{L}_{\T,\C}V(v(t))=\frac{d}{dt}V(v(t))$ for any classical solution $v(t)$ to the PIE. 
\end{definition}

Choosing $V \in \Sigma_{2d}$, the Lie derivative $\mathcal{L}_{\T,\C}V(v)$ along the polynomial PIE in~\eqref{eq:pie} will also be a distributed polynomial. In order to evaluate this derivative, we assume $V$ to be of the form $V(v)=\hat{V}(\T v)=\hat{V}(u)$, where $\hat{V}(u)$ is then a LF for the PDE. The following Lemma provides an explicit expression for the Lie derivative of a degree 2 distributed SOS LF. 
Although the Lie derivative of a higher-degree distributed SOS LF may be computed similarly, this derivation is significantly more involved, and is therefore deferred to a future publication.

\begin{lemma}
    For a given polynomial PIE defined by $\{\T,\C\}$, let $\C:=\bmat{\C_{1}&\cdots&\C_{d'}}$. For a self-adjoint 3-PI operator $\Pop=\Pop^*$, let $V(v):=\inner{v}{\T^{*}\Pop\T v}_{L_{2}}$. Then
    \begin{equation*}
        \mathcal{L}_{\T,\C}V(v)
        =2\sum_{i=1}^{d'}\tnf{vec} \left(\C_{i} \otimes (\Pop \circ \T) \right)[v^{\otimes i+1}]
    \end{equation*}
\end{lemma}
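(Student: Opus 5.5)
The plan is to differentiate $V$ along a classical solution, use self-adjointness of $\Pop$ to collapse the two resulting terms into one, replace $\T v_t$ by the right-hand side of the PIE, and then convert each resulting integral into linear form with Lem.~\ref{lem:vec}.

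First I would differentiate. Let $v$ be a classical solution to the PIE \eqref{eq:pie}. By Def.~\ref{def:pie_sol}, $v$ is Fr\'echet differentiable. The operators $\T$ and $\Pop$ are bounded linear operators on $L_{2}$, so $t \mapsto \T v(t)$ is differentiable with derivative $\T v_t$. Writing $V(v) = \inner{\T v}{\Pop \T v}_{L_2}$, the product rule for the inner product gives
\begin{equation*}
\tfrac{d}{dt}V(v(t)) = \inner{\T v_t}{\Pop\T v}_{L_2} + \inner{\T v}{\Pop\T v_t}_{L_2}.
\end{equation*}
Since $\Pop=\Pop^*$, the second term equals the first, so the derivative is $2\inner{\T v_t}{\Pop\T v}_{L_2}$.

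Next I would substitute the dynamics. The PIE gives $\T v_t = \C\Z_{d'}(v) = \sum_{i=1}^{d'} \C_i v^{\otimes i}$, so
\begin{equation*}
\tfrac{d}{dt}V(v(t)) = 2\sum_{i=1}^{d'} \int_\Omega (\C_i v^{\otimes i})(s)\,((\Pop\circ\T)v)(s)\,ds .
\end{equation*}
This expression depends only on $v(t)$, which is exactly what the definition of $\mathcal{L}_{\T,\C}V$ requires.

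Then I would identify the integrand as a scalar T-PI operator and vectorize. Each $\C_i$ is a scalar T-PI operator, i.e.\ a sum of tensor products of 3-PI operators. The operator $\Pop\circ\T$ is a 3-PI operator because 3-PI operators form an algebra. By Def.~\ref{def:TPI}, the pointwise product $(\C_i v^{\otimes i})(s)\,((\Pop\circ\T)v)(s)$ equals $((\C_i\otimes(\Pop\circ\T))[v^{\otimes i}\otimes v])(s)$, and $v^{\otimes i}\otimes v = v^{\otimes i+1}$. So $\C_i\otimes(\Pop\circ\T)$ is a scalar T-PI operator on $L_2[\Omega^{i+1}]$. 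Lem.~\ref{lem:vec} then turns each integral over $\Omega$ into $\tnf{vec}(\C_i\otimes(\Pop\circ\T))[v^{\otimes i+1}]$, which yields the stated formula.

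The main obstacle is this last vectorization step. Def.~\ref{def:vec} and Lem.~\ref{lem:vec} are formulated for tensor products of 2-PI operators with polynomial kernels. Here, however, $\C_i$ contains the multiplier $\M_{c_i}$ and possibly the identity factor $\R_n$, and $\Pop\circ\T$ is a general 3-PI operator, so these are not plain 2-PI factors. I would handle this in two steps. The multiplier terms can be absorbed into the kernels, since $c_i(s)$ depends only on the outer variable $s$ and can be moved inside the $s$-integral in $H_\alpha$. For the $R_0$-type multiplier parts, I would either argue that Lem.~\ref{lem:vec} extends to them by the same splitting argument of Lem.~\ref{lem:int_split}, or note that $\Pop\circ\T\in\Pi_2$ because $\T\in\Pi_2$ and composition with a 2-PI operator yields a 2-PI operator. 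If neither works cleanly, I would interpret $\tnf{vec}$ here simply as the F-PI operator whose defining property is the identity in Lem.~\ref{lem:vec}.
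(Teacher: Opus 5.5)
Your proposal is correct and follows the paper's proof step for step: differentiate along a classical solution, use $\Pop=\Pop^*$ to get $2\inner{\T v_t}{\Pop\T v}_{L_2}$ (a step the paper leaves implicit), substitute $\T v_t=\sum_{i}\C_i v^{\otimes i}$, rewrite the integrand as $(\C_i\otimes(\Pop\circ\T))[v^{\otimes i+1}]$ by Def.~\ref{def:TPI}, and apply Lem.~\ref{lem:vec}. The scope problem you flag in the last step is real, and the paper does not address it: your observation $\Pop\circ\T\in\Pi_2$ is correct but only handles the $\Pop$ factor, while the splitting extension for the identity factors $\R_n$ inside $\C_i$ works only if every monomial has $\alpha_{in}\le 1$, because a factor $\R_n\otimes\R_n$ yields $\int_\Omega v(s)^2(\Pop\T v)(s)\,ds$, which is concentrated on the diagonal and in general has no F-PI representation, so your fallback of taking $\tnf{vec}$ to be an F-PI operator is unavailable in that case.
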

\begin{proof}
    Let $v(t)$ be an arbitrary classical solution to the PIE defined by $\{\T,\C\}$. Then, by Def. \ref{def:TPI} and Lem. \ref{lem:vec},
    \begin{align*}
        \frac{d}{dt}V(v(t))
        &=2\inner{\T v_{t}(t)}{\Pop\T v(t)}_{L_{2}} \\
        &=2\inner{\C \Z_{d}(v(t))}{\Pop\T v(t)}_{L_{2}} \\
        &=2\sum_{i=1}^{d'}\int_{\Omega}(\C_{i}v(t,\cdot)^{\otimes i})(s)(\Pop\T v(t,\cdot))(s)\, ds  \\
        &=2\sum_{i=1}^{d'}\int_{\Omega}\left(\C_{i} \otimes (\Pop \circ \T)\right)[v(t,\cdot)^{\otimes i+1}](s)\, ds    \\
        &=2\sum_{i=1}^{d'}\tnf{vec}(\C_{i} \otimes (\Pop \circ \T) )v(t)^{\otimes i+1} \tag*{\qed}
    \end{align*}
    \renewcommand{\qedsymbol}{}
\end{proof}
Given that the Lie derivative of a distributed polynomial is again a distributed polynomial, negativity of this derivative can be tested by searching for a distributed SOS polynomial, $W\in\Sigma_{2d}$, such that $\mathcal{L}_{\T,\C}V=-W$. In this manner, existence of a LF certifying stability on the $L_{2}$-ball of radius $r$ can be formulated as a distributed SOS program, yielding the following Theorem.

\begin{theorem} \label{thm:stabilitySOS}
    For a given polynomial PDE with domain $X_{B}$ as in~\eqref{eq:X_B} satisfying the conditions of Lem.~\ref{lem:fr}, let $\{\T,\C\}$ denote the associated polynomial PIE. For $r > 0$, let $g_{r}(v) := r^{2} - \tnf{vec}(\T \otimes \T) v^{\otimes 2}$. For some $\epsilon,C >0$, $\lambda\geq 0$, and $d, d_{1}', d_{2}' \in \N$, suppose there exists $V \in \Sigma_{2d}$, $p_{1} \in \Sigma_{2d_{1}'}$, and $p_{2} \in \Sigma_{2d_{2}'}$, such that
    \begin{align*}
        & & V(v) - \epsilon^2 \tnf{vec}(\T \otimes \T)v^{\otimes 2} & \in\Sigma_{2d} \\
        & & C^2 \tnf{vec}(\T \otimes \T)v^{\otimes 2} -V(v) - p_{1}(v) g_{r}(v) & \in\Sigma_{2\hat{d}_{1}} \\
        & & -\mathcal{L}_{\T,\C}V(v) -2\lambda V(v) - p_{2}(v) g_{r}(v) & \in \Sigma_{2\hat{d}_{2}}
    \end{align*}
    where $\hat{d}_{i} = \max \{d, d_{i}'+1\}$. Then, for $M:=C/\epsilon$ and any $u_{0} \in X_{B,r/M}$, as defined in \eqref{eq:Xbr}, solutions to the PDE with initial state $u(0)=u_{0}$ satisfy
    \begin{equation*}
        \norm{u(t)}{L_{2}}{} \leq Me^{-\lambda t} \norm{u_{0}}{L_{2}}{} \hspace{1mm} \fa t\geq 0
    \end{equation*}
\end{theorem}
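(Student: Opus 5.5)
This is a standard local Lyapunov argument in the PIE coordinates, followed by an invariance argument on a sublevel set. Throughout, let $v$ be the PIE solution associated with the PDE solution $u$, so that $u(t)=\T v(t)$. Note that $\tnf{vec}(\T\otimes\T)v^{\otimes 2}=\norm{\T v}{L_2}{2}=\norm{u}{L_2}{2}$ by Lem.~\ref{lem:vec}, and that the PIE solution lemma of Section~\ref{sec:PIEs} gives this correspondence. Every element of $\Sigma_{2k}$ is nonnegative on $L_2[\Omega]$. Hence the three SOS constraints become the pointwise inequalities
\begin{align*}
\epsilon^2\norm{\T v}{L_2}{2} &\le V(v), \\
V(v) &\le C^2\norm{\T v}{L_2}{2}-p_1(v)g_r(v), \\
\mathcal{L}_{\T,\C}V(v) &\le -2\lambda V(v)-p_2(v)g_r(v),
\end{align*}
valid for all $v\in L_2[\Omega]$. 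Since $p_1,p_2\ge 0$, on the distributed semi-algebraic set $L_{2,r}=\{g_r\ge0\}$ of \eqref{eq:L2r} we obtain $\epsilon^2\norm{\T v}{}{2}\le V(v)\le C^2\norm{\T v}{}{2}$ and $\frac{d}{dt}V(v(t))\le-2\lambda V(v(t))$ whenever $v(t)\in L_{2,r}$. Here I use the definition of the Lie derivative along classical solutions.

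The key step is invariance. Take $u_0\in X_{B,r/M}$, so $\norm{u_0}{}{}\le r\epsilon/C$. Then $V(v_0)\le C^2\norm{u_0}{}{2}\le \epsilon^2 r^2$, where the upper bound applies because $g_r(v_0)\ge r^2(1-\epsilon^2/C^2)\ge 0$. This requires $\epsilon\le C$, which is automatic: the first two inequalities at any $v$ with $g_r(v)\ge0$ and $\T v\ne0$ force $\epsilon\le C$. Define the sublevel set $\Omega_c=\{v: V(v)<\epsilon^2r^2\}$, or its non-strict version. On this set $\norm{\T v}{}{2}\le V/\epsilon^2< r^2$, so $\Omega_c\subset L_{2,r}$. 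Let $t^*=\sup\{t: v(\tau)\in L_{2,r}\ \forall\tau\in[0,t]\}$. On $[0,t^*)$, $V$ is nonincreasing, so $V(v(t))\le V(v_0)\le\epsilon^2r^2$. Thus $\norm{u(t)}{}{}\le r$ there, with strict inequality if $V(v_0)<\epsilon^2r^2$. By continuity of $t\mapsto \norm{\T v(t)}{}{}$ (Fréchet differentiability of $v$ and boundedness of $\T$), the trajectory cannot leave $L_{2,r}$, so $t^*=\infty$. The boundary case $V(v_0)=\epsilon^2r^2$ needs care. I would handle it by a perturbation or strictness argument: either apply the argument with $r'$ slightly larger where the hypotheses still hold, or note that $\dot V\le -2\lambda V$ prevents the norm from exceeding $r$ since $V$ never increases. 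In fact the inequality $\norm{\T v}{}{2}\le V(v(t))/\epsilon^2\le r^2$ already closes the set, so a continuity argument on the closed set $\{g_r\ge0\}$ suffices. This is the step I expect to need the most care.

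Finally, for all $t\ge0$, Grönwall's inequality gives $V(v(t))\le e^{-2\lambda t}V(v_0)$. Therefore
\[
\epsilon^2\norm{u(t)}{}{2}\le V(v(t))\le e^{-2\lambda t}C^2\norm{u_0}{}{2},
\]
and taking square roots gives $\norm{u(t)}{L_2}{}\le (C/\epsilon)e^{-\lambda t}\norm{u_0}{L_2}{}$, which is the claim with $M=C/\epsilon$.
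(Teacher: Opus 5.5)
Your argument is essentially the paper's proof: the same pointwise inequalities on $L_{2,r}$, the same inclusions $L_{2,r/M}\subseteq\{v: V(v)\le\epsilon^2r^2\}\subseteq L_{2,r}$, invariance of that sublevel set, and Gr\"onwall pulled back through $u=\T v$. You are also more explicit than the paper in checking $\epsilon\le C$ (which the first inclusion silently requires) and in giving a first-exit-time argument. One correction to your hedging on the boundary case: drop the ``slightly larger $r'$'' option, since enlarging $r$ subtracts $(r'^2-r^2)p_2$ from the third certificate and so need not preserve the hypotheses; for $\lambda>0$ the case $V(v_0)=\epsilon^2r^2$ closes because the derivative of $V(v(t))$ at the exit time is at most $-2\lambda\epsilon^2r^2<0$, whereas for $\lambda=0$ you, like the paper, are simply asserting invariance of the closed sublevel set.
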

\begin{proof}
    Suppose there exists $V$ satisfying the proposed conditions, and let $\hat{V}(u):=V(u_{s_{n}})$ for $u\in X_{B}$, so that $\hat{V}(\T v)=V(v)$. An $\epsilon >0$ must exist such that $\hat{V}(u) \geq \epsilon^2 \|u\|_{L_{2}}^{2}$; hence, a sufficiently small $\gamma > 0$  can always be chosen such that $\hat{V}(u) \leq \gamma^2$, which implies $\|u\|_{L_{2}}^2 \leq r^{2}$. That is, the $\gamma^2$-level set of $\hat{V}(u)$ is contained in $X_{B,r}$. Now define $L_{2,r}[\Omega] := \{v\in L_{2}[\Omega]: \hspace{1mm} g_{r}(v) \geq  0 \}$ and suppose the distributed SOS program is feasible. Then, for all $v \in L_{2,r}$
    \begin{align*}
        V(v) \geq \epsilon^2 \tnf{vec}(\T \otimes \T)v^{\otimes 2} & \geq \epsilon^2 \norm{\T v}{L_{2}}{2} \\
        V(v) \leq C^2 \tnf{vec}(\T \otimes \T)v^{\otimes 2} - p_{1}(v)g_{r}(v) & \leq C^2 \norm{\T v}{L_{2}}{2} \\
        \mathcal{L}_{\T,\C}V(v) \leq -2\lambda V(v) - p_{2}(v)g_{r}(v) &\leq 0
    \end{align*}
    For $\gamma>0$, define the $\gamma^2$-sublevel set of $V$ as
    \begin{equation*}
        S_{\gamma}(V) := \{v\in L_{2}[\Omega]: \hspace{1mm} V(v) \leq \gamma^2\}
    \end{equation*}
    Since $V(v) \geq \epsilon^2 \norm{\T v}{L_{2}}{2}$ for all $v \in L_{2}[\Omega]$, it follows that for all $v \in S_{\epsilon r}$
    \begin{equation*}
        \norm{\T v}{L_{2}}{2} \leq \frac{1}{\epsilon^2} V(v) \leq r^2
    \end{equation*}
    whence $S_{\epsilon r} \subseteq L_{2,r}$. Since $\dot{V}(v) \leq 0$ for all $v \in L_{2,r}$, this implies that for any solution $v(t)$ to the PIE, $v(0)\in S_{\epsilon r}$ implies $v(t)\in L_{2,r}$ for all $t \geq 0$. Furthermore, since $V(v) \leq C^2 \norm{\T v}{L_{2}}{2}$, we have for any $v_{0} \in L_{2,r/M}$ that
    \begin{equation*}
        V(v_{0}) \leq C^2 \norm{\T v}{L_{2}}{2} \leq r^2 \epsilon^2
    \end{equation*}
    whence $L_{2,r/M} \subseteq S_{\epsilon r}$. It follows that, if $v(0) = v_{0} \in L_{2,r/M}$, then $v(t)\in S_{\epsilon r} \subseteq L_{2,r}$ and therefore $\mathcal{L}_{\T,\C}V(v(t)) \leq -2\lambda V(v(t))$ for all $t \geq 0$. By the Gr\"onwall-Bellman inequality, $v_{0}\in L_{2,r/M}$ implies
    \begin{equation*}
        V(v(t)) \leq e^{-2 \lambda t} V(v_{0}) \hspace{1mm} \fa t \geq 0
    \end{equation*}
    Now, for any $u_{0}\in X_{B,r/M}$, let $u(t)\in X_{B}$ be the associated solution to the PDE for $t \geq 0$. Then $u(t) = \T v(t)$ for all $t \geq 0$, where $v(t) := \p{u}{s}{n}(t) \in L_{2}[\Omega]$ is a solution to the PIE, with $v_{0} := (u_{0})_{s_{n}}$. Since $u_{0} \in X_{B,r/M}$, we have $\norm{u_{0}}{L_{2}}{} = \norm{\T v_{0}}{L_{2}}{} \leq r/M$, and therefore $v_{0} \in L_{2,r/M}$. It follows that, for all $t \geq 0$
    \begin{align*}
        \norm{u(t)}{L_{2}}{} &= \norm{\T v(t))}{L_{2}}{} \leq \frac{1}{\epsilon}\sqrt{V(v(t))} \\
        &\leq \frac{1}{\epsilon} e^{-\lambda t} \sqrt{V(v_{0})} \\
        &\leq \frac{C}{\epsilon} e^{- \lambda t} \norm{\T v_{0}}{L_{2}}{} = Me^{- \lambda t}\norm{u_{0}}{L_{2}}{}  \tag*{\qed}
    \end{align*}
    \renewcommand{\qedsymbol}{}
\end{proof}

\begin{table}[t!]
\setlength{\tabcolsep}{3.75pt}
    \caption{Largest lower bound on decay rate $\lambda$ ensuring exponential stability of the Fisher Eq. \eqref{eq:Fisher} on an $L_2$-ball of radius $r$.}
    \begin{center}
        \vspace{-4mm}
        \begin{tabular}{c|ccccccccc}
             $r$ & 0.01 & 0.05 & 0.1 & 0.25 & 0.5 & 1 & 2 & 3 & 4 \\\hline
             $\lambda_{r}$ & 4.857 & 4.810 & 4.734 & 4.572 & 4.270  & 3.650 & 2.692 & 1.388 & 0.047 
        \end{tabular}
        \vspace{-3mm}
    \end{center}
     \label{tab:FisherRates}
\end{table}

\addtolength{\textheight}{-3.6cm}

\section{Numerical Example} \label{sec:num}
In this section, we apply the distributed SOS program (Thm. \ref{thm:stabilitySOS}) to the Fisher Equation with Dirichlet BCs. 
\begin{equation} \label{eq:Fisher}
    \begin{split}
        u_{t}(t,s) &= u_{ss}(t,s) + \alpha u(t,s) - \beta u(t,s)^2 \\
        u(t,0) &= u(t,1) = 0 \hspace{3mm} t \geq 0\, \hspace{1mm} s\in[0,1]
    \end{split}
\end{equation}
%
We consider this system with parameters $(\alpha, \beta) =(5, -1)$, where the origin of the linearization is exponentially stable with rate $\lambda=4.869$. However, for the nonlinear system, the term $-u(t,s)^2$ dominates when $\|u(t)\|_{L_{2}}$ becomes too large, so the origin is only locally stable. To verify this numerically, we construct the PIE representation as
\begin{equation*}
    \T \p{v}{t}{}(t,s) = (\C_{1}v)(t,s) + (\C_{2}[v^{\otimes 2}])(t,s)
\end{equation*}
where $v:=u_{ss}$, $\C_{1}=1 + 5\T \in \Pi_{3}$, $C_{2} = - \T^{\otimes 2}$, and
\begin{equation*}
    (\T v)(s):=\int_{0}^{s}(s-1)\theta v(\theta)\, d\theta +\int_{s}^{1}s(\theta-1)v(\theta)\, d\theta
\end{equation*}

\begin{figure}[t!]
    \centering
    \includegraphics[width=0.99\linewidth]{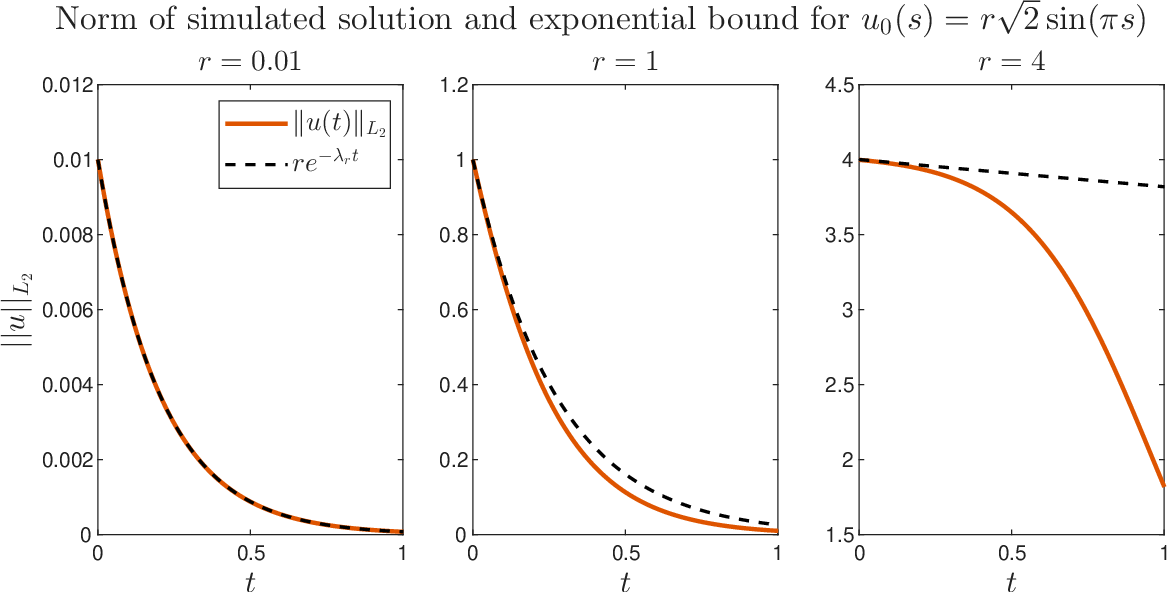}
    \vspace{-7mm}
    \caption{Norm of simulated solutions to the Fisher Equation~\eqref{eq:Fisher} for $u_{0}(s):=r\sqrt{2}\sin(\pi s)$, along with an exponential upper bound $re^{-\lambda_{r} t}$ with decay rates $\lambda_{r}$ as in Table~\ref{tab:FisherRates}.}
    \label{fig:FisherSim}
    \vspace{-2mm}
\end{figure}

We can verify local stability of the Fisher Equation by solving the distributed SOS program in Thm.~\ref{thm:stabilitySOS}. Specifically, we fix $\lambda=0$ and declare a quadratic LF, $V(v)=\inner{\T v}{\Q \T v}_{L_{2}} \in \Sigma_{2}$, where $\Q=\hat{\Q}+\epsilon^2 I$ for $\hat{Q}$ as in Def.~\ref{def:SOSpoly} with $d=1$ and $\bar{d}=4$. Performing bisection on $r$, feasibility of the distributed SOS program (Thm.~\ref{thm:stabilitySOS}) can then be verified up to $r=4.048$. In particular, feasibility can be verified with the energy functional $V(v)=\inner{\T v}{\T v}_{L_{2}}$ (where $M=\epsilon$). Simulating the PDE with initial condition $u_{0}(s) = r \sqrt{2} \sin(\pi s)$ (so that $\|u_{0}\|_{L_{2}}=r$), instability is observed for $r>4.05$, suggesting the lower bound is tight. 

Thm.~\ref{thm:stabilitySOS} is also applied to verify exponential stability of the PDE~\eqref{eq:Fisher} on the $L_{2}$-ball of varying radius $r$, in the PDE domain. The largest lower bound, $\lambda$, on the exponential decay rate is established by performing bisection on $\lambda$. The results are given in Table \ref{tab:FisherRates}. For each value of $r$ and $\lambda$, the obtained value of $M$ in Thm.~\ref{thm:stabilitySOS} is roughly equal to 1, certifying that $\|u_{0}\|_{L_{2}}\leq r$ implies $u(t) \leq e^{-\lambda t} \|u_{0}\|_{L_{2}}$ for all $t \geq 0$. The simulated solutions for $u_{0}(s) = r \sqrt{2} \sin(\pi s)$ with $r \in \{0.01,1.0,4.0\}$ are displayed in Fig.~\ref{fig:FisherSim}, along with the associated exponential bounds $re^{-\lambda t}$. The results show that the obtained $\lambda$ provides a tight lower bound on the initial decay rate of the solution starting with $\|u_{0}\|_{L_{2}} \leq r$. As the value of $r$ decreases to 0, the lower bound $\lambda$ also increases to the rate $\lambda=4.869$ of the linearized system, as expected.

\section{Conclusion}

In this paper, we established a new framework for the local stability analysis of 1D polynomial PDEs with linear BCs. By leveraging the PIE representation, we transformed the evolution of a PDE state into the evolution of its highest spatial derivative — the ``fundamental state". This shift effectively embeds BCs and continuity constraints directly into the operator algebra, removing the need to enforce them as external constraints on the state space. Our primary technical contributions include:
\begin{itemize}
    \item Distributed Polynomials - A unifying representation of polynomial PDEs, LFs, and distributed semi-algebraic sets.
    \item Distributed SOS Programming - A SOS parameterization of distributed polynomials, which allowed us to enforce positivity of distributed polynomials through equality constraints on the operator parameters.
    \item Local Stability Certificates - We constructed a distributed SOS program to test for local exponential stability on the $L_{2}$-ball, using distributed SOS polynomials as LF candidates.
\end{itemize}
This approach offers a systematic and computationally tractable method for analysing a wide range of nonlinear PDEs, without the application-specific manual derivations required by prior methods. Future research aims to extend this work to coupled PDEs with multiple spatial dimensions by combining with results from \cite{jagt2025state}. Furthermore, integrating these methods into the PIETOOLS software package will provide a general-purpose tool for the automated control and stability analysis of nonlinear PDEs.

\appendices
\section{Proofs}

\subsection{Proof of Lemma \ref{lem:int_split}} \label{app:int_split}
\begin{proof}
    The result follows by induction. For the base case $d=1$, the result holds trivially since $\mathbb{I}_{1}(\theta_{1})=1$. Now, suppose the relation holds for some $d \in \N$. Then, for $d+1$
    \begin{align*}
        &\int_{\Omega^{d+1}} \hspace{-5mm} K(\theta_{d+1}, \bs{\theta}_{d}) \, d\theta_{d+1} d \bs{\theta}_{d} = \int_{\Omega^{d}} \int_{\Omega} \hspace{-1mm} K(\theta_{d+1}, \bs{\theta}_{d}) \, d\theta_{d+1} d \bs{\theta}_{d}\\
        &= \sum_{\alpha\in A_{d}} \int_{\Omega^{d}} \int_{\Omega} \mathbb{I}_{\alpha}(\bs{\theta}_{d}) K(\theta_{d+1}, \bs{\theta}_{d})\, d \theta_{d+1} d \bs{\theta}_{d} \\
        &= \sum_{\alpha\in A_{d}} \int_{\Omega^{d}} \mathbb{I}_{\alpha}(\bs{\theta}_{d}) \left[ \int_{a}^{\theta_{\alpha_{1}}} K(\theta_{d+1}, \bs{\theta}_{d})\, d\theta_{d+1} \right. \\
        &\hspace*{2.0cm} +\sum_{j=1}^{d-1} \int_{\theta_{\alpha_{j}}}^{\theta_{\alpha_{j+1}}} K(\theta_{d+1}, \bs{\theta}_{d})\, d\theta_{d+1} \\
        &\hspace*{3.0cm} + \left. \int_{\theta_{\alpha_{d}}}^{b} K(\theta_{d+1}, \bs{\theta}_{d})\, d\theta_{d+1} \right] d \bs{\theta}_{d} \\
        &=\sum_{\alpha'\in A_{d+1}} \int_{\Omega^{d+1}} \mathbb{I}_{\alpha'}(\bs{\theta}_{d+1}) K(\bs{\theta}_{d+1}) \, d \bs{\theta}_{d+1}  \tag*{\qed}
    \end{align*}
    \renewcommand{\qedsymbol}{}
\end{proof}

\subsection{Proof of Lemma \ref{lem:polynomialProduct}}\label{app:polynomialProduct}

\begin{proof}
    Fix arbitrary $i,j\in\{1,\hdots,d\}$, and let $\K_{i}:L_{2}[\Omega^{i}]\to\Real$ and $\G_{j}:L_{2}[\Omega^{j}]\to\Real$ be defined by parameters $\{K_{\alpha}\}_{\alpha\in A_{i}}$ and $\{G_{\beta}\}_{\beta\in A_{j}}$, respectively. Then, for all $x\in L_{2}[\Omega]$,
    \begin{align*}
        &(\K_{i}x^{\otimes i})\, (\G_{j}x^{\otimes j}) \\
        &=\hspace{-2mm} \sum_{\alpha\in A_{i}} \hspace{-1mm} \sum_{\beta\in A_{j}} \hspace{-2mm} \int_{\Omega^{i}} \hspace{-1mm}  \int_{\Omega^{j}} \hspace{-3mm} \mbb{I}_{\alpha}(\bs{\theta}_{i})\mbb{I}_{\beta}(\bs{\eta}_{j})H_{\alpha,\beta}(\bs{\theta}_{i},\bs{\eta}_{j}) w(\bs{\theta}_{i},\bs{\eta}_{j})\, d\bs{\eta}_{j} d\bs{\theta}_{i}
    \end{align*}
    where we define $H_{\alpha,\beta}(\bs{\theta}_{i}, \bs{\eta}_{j}):=K_{\alpha}(\bs{\theta}_{i})G_{\beta}(\bs{\eta}_{j})$ and $w \in L_{2}[\Omega^{i+j}]$.
    Now, for each $\alpha\in A_{i}$ and $\beta\in A_{j}$, define $\Xi$ as the set of all unions $\alpha\cup(\beta+i)$ that preserve the ordering of $\alpha$ and $\beta$,
    \begin{equation}
        \Xi_{\alpha,\beta}:=\{\gamma\in A_{i+j}\mid \sigma_{1}(\gamma)=\alpha,~\sigma_{2}(\gamma)-i=\beta\},
    \end{equation}
    where $\sigma_{1}(\gamma):=\gamma\setminus\{i+1,\hdots,i+j\}\in A_{i}$ and $\sigma_{2}(\gamma):=\gamma\setminus\{1,\hdots,i\}-i\in A_{j}$. Then, for any $\gamma\in A_{i+j}$,
    \begin{equation*}
        \mbb{I}_{\gamma}(\bs{\theta}_{i},\bs{\eta}_{j})\mbb{I}_{\alpha}(\bs{\theta}_{i})\mbb{I}_{\beta}(\bs{\eta}_{j})=
        \begin{cases}
            \mbb{I}_{\gamma}(\bs{\theta}_{i},\bs{\eta}_{j})    & \text{if} \hspace{2mm} \gamma\in \Xi_{\alpha,\beta} \\
            0  &   \tnf{else}
        \end{cases}
    \end{equation*}
    and therefore, by Lem.~\ref{lem:int_split},
    \begin{align*}
        &\int_{\Omega^{i}}\int_{\Omega^{j}}\mbb{I}_{\alpha}(\bs{\theta}_{i})\mbb{I}_{\beta}(\bs{\eta}_{j})H_{\alpha,\beta}(\bs{\theta}_{i},\bs{\eta}_{j})w(\bs{\theta}_{i},\bs{\eta}_{j})\, d\bs{\eta}_{j} d\bs{\theta}_{i}  \\
        &=\sum_{\gamma\in A_{i+j}}\int_{\Omega^{i}}\int_{\Omega^{j}}\mbb{I}_{\gamma}(\bs{\theta}_{i},\bs{\eta}_{j})\mbb{I}_{\alpha}(\bs{\theta}_{i})\mbb{I}_{\beta}(\bs{\eta}_{j})\\
        &\hspace*{4.0cm}H_{\alpha,\beta}(\bs{\theta}_{i},\bs{\eta}_{j})w(\bs{\theta}_{i},\bs{\eta}_{j})\, d\bs{\eta}_{j} d\bs{\theta}_{i}\\        
        &=\sum_{\gamma\in \Xi_{\alpha,\beta}}\int_{\Omega^{i+j}}\mbb{I}_{\gamma}(\bs{\theta}_{i+j})H_{\sigma_{1}(\gamma),\sigma_{2}(\gamma)}(\bs{\theta}_{i+j})w(\bs{\theta}_{i+j})\, d\bs{\theta}_{i+j}
    \end{align*}
    Noting that $\bigcup_{\alpha\in A_{i}}\bigcup_{\beta\in A_{j}}\Xi_{\alpha,\beta}=A_{i+j}$, this implies
    \begin{align*}
        &(\K_{i}x^{\otimes i})\, (\G_{j}x^{\otimes j})
        =\sum_{\alpha\in A_{i}}\sum_{\beta\in A_{j}}
        \sum_{\gamma\in \Xi_{\alpha,\beta}}\int_{\Omega^{i+j}}\mbb{I}_{\gamma}(\bs{\theta}_{i+j})\\
        &\hspace*{3.0cm}H_{\sigma_{1}(\gamma),\sigma_{2}(\gamma)}(\bs{\theta}_{i+j})x^{\otimes i+j}(\bs{\theta}_{i+j})\, d\bs{\theta}_{i+j}\\
        &=\sum_{\gamma\in A_{i+j}}\int_{\Omega^{i+j}}\mbb{I}_{\gamma}(\bs{\theta}_{i+j})H_{\sigma_{1}(\gamma),\sigma_{2}(\gamma)}(\bs{\theta}_{i+j})x^{\otimes i+j}(\bs{\theta}_{i+j})\, d\bs{\theta}_{i+j} \\
        &=(\K_{i}\otimes \G_{j})x^{\otimes i+j}.
    \end{align*}    
    Now, let $p(x)=\K\Z_{d}(x)$ and $q(x)=\G\Z_{d}(x)$. Then, it follows that
    \begin{align*}
        p(x)q(x)
        &=\K\Z_{d}(x)\, \G\Z_{d}(x) =\sum_{i=1}^{d}\sum_{j=1}^{d}(\K_{i}x^{\otimes i})(\G_{j} x^{\otimes j})   \\
        &=\sum_{i=1}^{d}\sum_{j=1}^{d}(\K_{i}\otimes\G_{j})x^{\otimes i+j}
        =\C\Z_{2d}(x).
    \end{align*}
    where $\C = [0 \hspace{2mm} \C_{2} \hspace{2mm} \cdots \hspace{2mm} \C_{2d}]$, $\C_{k} = \sum_{(i,j) \in \Sm_{k}} \K_{i} \otimes \G_{j}$, and $\Sm_{k}$ is defined in Section \ref{sec:not}.
\end{proof}

\subsection{Proof of Lemma \ref{lem:vec}} \label{app:vec}
\begin{proof}
    Let $\Omega:= [a,b]$. For $i \in \{1, \dots ,d\}$ and a given set of polynomial parameters $R_{i,1}, R_{i,2} \in \Real[s,\theta_{i}]$, define the set of scalar 2-PI operators $\R_{i} = \Pop_{ \{ R_{i,1}, R_{i,2} \} } \in \Pi_{2}$. Now, let $\Hop = \R_{1} \otimes \cdots \otimes \R_{d}: L_{2}[\Omega^{d}] \to L_{2}[\Omega]$. Then,
    \begin{align*}
        &\int_{\Omega} (\Hop x)(s) ds =  \int_{\Omega} (\R_{1} x_{1})(s) \cdots (\R_{d} x_{d})(s) \, ds \\
        &= \sum_{\beta \in \{1,2\}^{d}} \int_{\Omega} \hspace{-1mm} \int_{\Omega^{d}} \prod_{i=1}^{d} \left[\mbf{I}^{\beta_{i}}(s,\theta_{i}) R_{i,\beta_{i}}(s,\theta_{i}) x_{i}(\theta_{i}) d \theta_{i} \right] ds        
    \end{align*}
    where $\mbf{I}^{1}(s,\theta_{i}) = \mbf{I}_{0^{+}}(s - \theta_{i})$, $\mbf{I}^{2}(s,\theta_{i}) = \mbf{I}_{0^{+}}(\theta_{i} - s)$. Defining $x(\bs{\theta}_{d}):= x_{d}(\theta_{d}) \cdots x_{1}(\theta_{1})$, and applying Lemma \ref{lem:int_split}, where $K(\bs{\theta}_{d}) = M(\bs{\theta}_{d}) x(\bs{\theta}_{d})$, it follows that
    \begin{align*}
        &= \int_{\Omega^{d}} \Big[ \hspace{-2mm} \overbrace{\sum_{\beta\in\{1,2\}^{d}} \int_{\Omega} \prod_{i=1}^{d} \mbf{I}^{\beta_{i}}(s,\theta_{i}) R_{i,\beta_{i}}(s,\theta_{i}) ds \Big]}^{:= M(\bs{\theta}_{d})}  x(\bs{\theta}_{d}) \, d \bs{\theta}_{d} \\
        &= \sum_{\alpha\in A_{d}} \int_{\Omega^{d}} \mathbb{I}_{\alpha}(\bs{\theta}_{d}) M(\bs{\theta}_{d}) x(\bs{\theta}_{d}) \, \hspace{-1mm} d \bs{\theta}_{d}
    \end{align*}
    Now, consider an arbitrary $\alpha\in A_{d}$. Then, we can decompose
    \begin{align*}
        &\mathbb{I}_{\alpha}(\bs{\theta}_{d}) M(\bs{\theta}_{d}) \\
        &= \sum_{\beta\in\{1,2\}^{d}} \sum_{j=0}^{d} \int_{\theta_{j}}^{\theta_{j+1}} \mathbb{I}_{\alpha}(\bs{\theta}_{d})\prod_{i=1}^{d} [\mbf{I}^{\beta_{i}}(s,\theta_{i}) R_{i,\beta_{i}}(s,\theta_{i}) ] ds
    \end{align*}
    where $\theta_{0}:=a$ and $\theta_{d+1}:=b$. By definition of $\gamma(\alpha_{k},j)$ (Section \ref{sec:not}) and $\mbf{I}^{\beta_{i}}(s,\theta_{i})$, it follows that
    \begin{align*}
        \mathbb{I}_{\alpha}(\bs{\theta}_{d}) M(\bs{\theta}_{d}) \hspace{-1mm} &= \hspace{-1mm} \sum_{j=0}^{d} \int_{\theta_{\alpha_{j}}}^{\theta_{\alpha_{j+1}}} \hspace{-1mm} \prod_{i=1}^{d} R_{i,\gamma(i,j)}(s,\theta_{i}) \, ds := Q_{\alpha}(\bs{\theta}_{d})
    \end{align*}
    Subbing $Q_{\alpha}(\bs{\theta}_{d})$ back into the full expression results in
    \begin{align*}
        \int_{\Omega} (\Hop x)(s) ds &= \sum_{\alpha \in A_{d}} \int_{\Omega^{d}} Q_{\alpha}(\bs{\theta}_{d}) x(\bs{\theta}_{d}) d \bs{\theta}_{d} \\
        &= \tnf{vec}(\Hop) x \quad \fa x \in L_{2}[\Omega^{d}]  \tag*{\qed}
    \end{align*}
    \renewcommand{\qedsymbol}{}
\end{proof}

\bibliographystyle{ieeetr}
\bibliography{refs} 

@inproceedings{valmorbida2014semi,
  title={Semi-definite programming and functional inequalities for distributed parameter systems},
  author={Valmorbida, Giorgio and Ahmadi, Mohamadreza and Papachristodoulou, Antonis},
  booktitle={53rd IEEE conference on decision and control},
  pages={4304--4309},
  year={2014},
  organization={IEEE}
}

@article{prieur2026stability,
  title={Stability of the wave equation with a saturated dynamic boundary control},
  author={Prieur, Christophe and Tarbouriech, Sophie},
  journal={Automatica},
  volume={187},
  pages={112920},
  year={2026},
  publisher={Elsevier}
}

@article{goulart2012global,
  title={Global stability analysis of fluid flows using sum-of-squares},
  author={Goulart, Paul J and Chernyshenko, Sergei},
  journal={Physica D: Nonlinear Phenomena},
  volume={241},
  number={6},
  pages={692--704},
  year={2012},
  publisher={Elsevier}
}

@article{huang2015sum,
  title={Sum-of-squares of polynomials approach to nonlinear stability of fluid flows: an example of application},
  author={Huang, Deqing and Chernyshenko, Sergei and Goulart, Paul and Lasagna, Davide and Tutty, Owen and Fuentes, Federico},
  journal={Proceedings of the Royal Society A: Mathematical, Physical and Engineering Sciences},
  volume={471},
  number={2183},
  year={2015},
  publisher={The Royal Society}
}

@article{fuentes2022global,
  title={Global stability of fluid flows despite transient growth of energy},
  author={Fuentes, Federico and Goluskin, David and Chernyshenko, Sergei},
  journal={Physical Review Letters},
  volume={128},
  number={20},
  pages={204502},
  year={2022},
  publisher={APS}
}

@article{goluskin2019bounds,
  title={Bounds on mean energy in the {K}uramoto--{S}ivashinsky equation computed using semidefinite programming},
  author={Goluskin, David and Fantuzzi, Giovanni},
  journal={Nonlinearity},
  volume={32},
  number={5},
  pages={1705--1730},
  year={2019},
  publisher={IOP Publishing}
}

@article{fridman2016new,
  title={New stability and exact observability conditions for semilinear wave equations},
  author={Fridman, Emilia and Terushkin, Maria},
  journal={Automatica},
  volume={63},
  pages={1--10},
  year={2016},
  publisher={Elsevier}
}

@article{valmorbida2015stability,
  title={Stability analysis for a class of partial differential equations via semidefinite programming},
  author={Valmorbida, Giorgio and Ahmadi, Mohamadreza and Papachristodoulou, Antonis},
  journal={IEEE Transactions on Automatic Control},
  volume={61},
  number={6},
  pages={1649--1654},
  year={2015},
  publisher={IEEE}
}

@article{mironchenko2019monotonicity,
  title={Monotonicity methods for input-to-state stability of nonlinear parabolic {PDEs} with boundary disturbances},
  author={Mironchenko, Andrii and Karafyllis, Iasson and Krstic, Miroslav},
  journal={SIAM Journal on Control and Optimization},
  volume={57},
  number={1},
  pages={510--532},
  year={2019},
  publisher={SIAM}
}

@incollection{korda2022moments,
  title={Moments and convex optimization for analysis and control of nonlinear {PDEs}},
  author={Korda, Milan and Henrion, Didier and Lasserre, Jean Bernard},
  booktitle={Handbook of Numerical Analysis},
  volume={23},
  pages={339--366},
  year={2022},
  publisher={Elsevier}
}

@article{marx2018moment,
  title={A moment approach for entropy solutions to nonlinear hyperbolic {PDEs}},
  author={Marx, Swann and Weisser, Tillmann and Henrion, Didier and Lasserre, Jean},
  journal={arXiv preprint arXiv:1807.02306},
  year={2018}
}

@inproceedings{prajna2002introducing,
  title={Introducing {SOSTOOLS}: A general purpose sum of squares programming solver},
  author={Prajna, Stephen and Papachristodoulou, Antonis and Parrilo, Pablo A},
  booktitle={Proceedings of the 41st IEEE Conference on Decision and Control, 2002.},
  volume={1},
  pages={741--746},
  year={2002},
  organization={IEEE}
}

@article{shivakumar2025pietools,
  title={{PIETOOLS} 2024: User manual},
  author={Shivakumar, Sachin and Jagt, Declan and Braghini, Danilo and Das, Amritam and Peet, Yulia and Peet, Matthew},
  journal={arXiv preprint arXiv:2501.17854},
  year={2025}
}

@article{shivakumar2024extension,
  title={Extension of the partial integral equation representation to {GPDE} input--output systems},
  author={Shivakumar, Sachin and Das, Amritam and Weiland, Siep and Peet, Matthew},
  journal={IEEE Transactions on Automatic Control},
  volume={70},
  number={5},
  pages={3240--3255},
  year={2024},
  publisher={IEEE}
}

@inproceedings{jagt20222,
  title={L2-Gain Analysis of Coupled Linear {2D PDEs} using Linear {PI} Inequalities},
  author={Jagt, Declan S and Peet, Matthew M},
  booktitle={2022 IEEE 61st Conference on Decision and Control (CDC)},
  pages={6097--6104},
  year={2022},
  organization={IEEE}
}

@article{wu2023h,
  title={$H_{\infty}$-optimal observer design for systems with multiple delays in states, inputs and outputs: A {PIE} approach},
  author={Wu, Shuangshuang and Peet, Matthew M and Sun, Fuchun and Hua, Changchun},
  journal={International Journal of Robust and Nonlinear Control},
  volume={33},
  number={8},
  pages={4523--4540},
  year={2023},
  publisher={Wiley Online Library}
}

@inproceedings{braghini2025static,
  title={Static Output Feedback Stabilization of Linear Systems with Multiple Delays},
  author={Braghini, Danilo and Tognetti, Eduardo S and Peet, Matthew M},
  booktitle={2025 IEEE 64th Conference on Decision and Control (CDC)},
  pages={8365--8370},
  year={2025},
  organization={IEEE}
}

@inproceedings{jagt2023pie,
  title={A {PIE} representation of scalar quadratic {PDEs} and global stability analysis using {SDP}},
  author={Jagt, Declan and Seiler, Peter and Peet, Matthew},
  booktitle={2023 62nd IEEE Conference on Decision and Control (CDC)},
  pages={2950--2957},
  year={2023},
  organization={IEEE}
}

@article{peet2021partial,
  title={A partial integral equation ({PIE}) representation of coupled linear {PDEs} and scalable stability analysis using {LMIs}},
  author={Peet, Matthew M},
  journal={Automatica},
  volume={125},
  pages={109473},
  year={2021},
  publisher={Elsevier}
}

@article{jagt2025state,
  title={A State-Space Representation of Coupled Linear Multivariate {PDEs} and Stability Analysis using {SDP}},
  author={Jagt, Declan S and Peet, Matthew M},
  journal={arXiv preprint arXiv:2508.14840},
  year={2025}
}

@book{logan2004applied,
  title={Applied partial differential equations},
  author={Logan, John David and Logan, David},
  year={2004},
  publisher={Springer}
}

\end{document}